\DeclareRobustCommand\mos[1]{\mathrel{|}\joinrel
\stackrel{#1}{\mathrel{=}}}
\DeclareRobustCommand\vds[1]{\,\mathrel{|}\joinrel\joinrel\joinrel
\frac{#1}{ \ \ \ }}
\DeclarePairedDelimiter{\ceil}{\lceil}{\rceil}
\DeclarePairedDelimiter{\floor}{\lfloor}{\rfloor}
\newcommand{\dia}[1]{\Diamond_{#1}}
\newcommand{\tr}{\sigma}
\newcommand{\logl}{\mathbf{K}\Lambda}
\newcommand{\ml}{ML\xspace}
\newcommand{\pl}{HModL\xspace}
\newcommand{\f}{\mathcal{F}}
\DeclareRobustCommand\mos[1]{\mathrel{|}\joinrel
\stackrel{#1}{\mathrel{=}}}
\newcommand{\mosn}[1]{\not \mos{#1}}
\newcommand{\mosm}[1]{\mos{#1}_{{ }_{\ml}}}
\newcommand{\mosp}[1]{\mos{#1}_{{ }_{\pl}}}
\DeclareRobustCommand\vds[1]{\,\mathrel{|}\joinrel\joinrel\joinrel
\frac{#1}{ \ \ \ }}
\newcommand{\mb}{\scriptscriptstyle{\Box}}
\newcommand{\vp}{\varphi}
\newcommand{\srb}{\sigma}
\newcommand{\ri}{\rightarrow}
\newcommand{\spt}{\sigma^{\mb}}
\newtheorem{theorem}{Theorem}
\newtheorem{lemma}[theorem]{Lemma}
\newtheorem{remark}[theorem]{Remark}
\newtheorem{proposition}[theorem]{Proposition}
\newtheorem{definition}[theorem]{Definition}
\title{From Hybrid Modal Logic to Matching Logic and Back \thanks{All authors contributed equally to this work.}}
\author{Ioana Leu\c stean \and
Natalia Moang\u a \and
Traian Florin \c Serb\u anu\c t\u a 
%
%
\institute{
Faculty of Mathematics and Computer Science,\\ University of Bucharest, Str. Academiei 14, 010014 Bucharest, Romania
  \email{ioana@fmi.unibuc.ro, natalia.moanga@drd.unibuc.ro, traian.serbanuta@fmi.unibuc.ro }
}}
\begin{document}
\maketitle

\begin{abstract}
Building on our previous work on hybrid polyadic modal logic we identify modal logic equivalents for Matching Logic, a logic for program specification and verification. This provides a rigorous way to transfer results between the two approaches, which should benefit both systems. 
\end{abstract}

\section{Introduction}

In this paper, we continue our work from~\cite{noi,noi2}, where we defined a (hybrid) many-sorted polyadic modal logic, for which we proved soundness and completeness, generalizing well-known results from the mono-sorted setting \cite{mod}.

Our research was inspired by Matching logic~\cite{rosu} which made some connections with modal logic~\footnote{Note that Matching logic was further developed in \cite{rosulics}, where techniques from modal logic are employed for the theoretical development.}. Nevertheless, while the system we proposed in \cite{noi2} was strong enough for performing specification and formal verification, its connection with Matching logic, its original motivation, was still to be established.

The purpose of this paper is that of stating the relation between our modal-logic-based systems and Matching logic. In this way we provide a rigorous way to transfer the results between the two approaches, hopefully in the benefit of both systems.  To this aim, we make the following contributions:

(1) We isolate  ${\mathcal H}_{ \Sigma}(\forall)$, a fragment of the system presented in \cite{noi2}, and we show that, when restricted to global deduction, it is equivalent with Matching Logic without definedness.

(2) We introduce ${\mathcal H}_{ \Sigma}(@_z,\forall)$, a strengthening of the system from \cite{noi2} which allows the satisfaction operators $@^s_z$ to also range over state variables, and we show that, when restricted to global deduction, it is equivalent to Matching Logic with the definedness operator.

\paragraph{Background.} For a general background on modal logic we refer to \cite{mod}. We recall that hybrid logics are modal logics that have special symbols (called ``nominals") that name the particular states of a model.  Recall that the satisfaction in modal logic is local, i.e. one analyzes what happens in a given point of the model. With respect to this, nominals can be seen as local constants and, given a model (a frame and an evaluation), the value of a nominal is a fixed singleton set. State variables are variables that range over the individual points of a model, while the usual (propositional) variables range over arbitrary sets of points. All these notions will be detailed in our many-sorted context, but we refer to \cite{hand} for a basic introduction in hybrid modal logics.

For $(S,\Sigma)$ a many-sorted signature, the many-sorted polyadic modal logic ${\mathcal H}_{ \Sigma}$ defined in \cite{noi}  is recalled in Figure~\ref{fig:k}. The system  ${\mathcal H}_{ \Sigma}(\forall)$, defined in Section \ref{sec1}, is a fragment of the system introduced by~\cite{noi2} which enriches ${\mathcal H}_{ \Sigma}$ with nominals, state variables and the forall binder.  This system is a many-sorted generalization  of a hybrid modal logic  defined in \cite{hyb}.  The second system,${\mathcal H}_{ \Sigma}(@_z,\forall)$ is an enrichment of the first one, through the incorporation of the modal satisfaction operators $@^s_x$ for $s\in S$ and $x$ a state variable or a nominal. Intuitively, the operator $@^s_z$ allows us to ``jump" at the element(world, state) denoted by $z$ and the truth value we infer at this point is visible on all sorts. One can see \cite{hand} for a discussion on the expressivity of satisfaction operators in hybrid modal logic. 

The systems $\mathcal{H}_{\Sigma} (\forall)$ and $\mathcal{H}_{\Sigma} (@_z,\forall)$ are presented in Section~\ref{sec1} and Section~\ref{sec2}, along with their completeness results, while the connection with Matching logic is clarified in Section~\ref{sec3}.

\section{The many-sorted hybrid  modal logic ${\mathcal H}_{ \Sigma}(\forall)$}\label{sec1}

Let $(S,\Sigma)$ be a many-sorted signature. In this section we perform hybridization on top of ${\mathcal H}_{ \Sigma}$, the many-sorted polyadic modal logic  defined in \cite{noi}. 

We recall that our language is determined by an $S$-sorted set of  propositional variables ${\rm PROP}=\{{\rm PROP}_s\}_{s\in S}$ such that  ${\rm PROP}_s\neq \emptyset$ for any $s\in S$ and  ${\rm PROP}_{s_1} \cap {\rm PROP}_{s_2} = \emptyset $ for any  $s_1 \neq s_2$ in $S$.
For any $n\in {\mathbb N}$ and $s,s_1,\ldots, s_n\in S$, we denote $\Sigma_{s_1\ldots s_n,s}=\{\sigma\in \Sigma\mid \sigma:s_1 \cdots s_n\to s\}$. The formulas of ${\mathcal H}_{ \Sigma}$ are an $S$-sorted set defined by: 
\vspace*{-0.2cm}
\begin{center}
$\phi_s :=  p\mid j\mid y_s\mid \neg \phi_s \mid\phi_s \vee \phi_s \mid \sigma(\phi_{s_1}, \ldots, \phi_{s_n})_s.$
\end{center}
\vspace*{-0.2cm}
\noindent For any $\sigma\in \Sigma_{s_1 \ldots s_n,s}$ the {\em dual operation} is
$\sigma^{\mb} (\phi_1, \ldots , \phi_n ) := \neg\sigma (\neg\phi_1, \ldots , \neg\phi_n ).$

In general, the sort of a formula will be determined by its context. When necessary, we'll denote by $\vp_s$ the fact that the formula $\vp$ has the sort $s$. 

 In order to define the semantics we introduce $(S,\Sigma)$\textit{-frames} and $(S,\Sigma)$\textit{-models}. An $(S,\Sigma)${\em -frame} is a tuple 
 $\mathcal{F} =({W},(R_\sigma)_{\sigma\in\Sigma})$
where  ${W} =\{ W_s \}_{s\in S}$ is an  $S$-sorted set (whose elements are referred as points, worlds, states, etc.) such that  $W_s\neq \emptyset$ for any $s\in S$ ,
and  ${R}_{\sigma} \subseteq  W_s \times W_{s_1} \times \ldots \times W_{s_n}$  for any $\sigma \in \Sigma_{s_1 \cdots s_n,s}$.
An $(S,\Sigma)$-{\em model based on} $\mathcal{F}$ is a pair  ${\mathcal M}= ({\mathcal F},V)$ where $V =\{V_s\}_{s\in S}$ such that $V_s : \rm{PROP}_s \to \mathcal{P}(W_s)$ for any $s\in S$. The model $\mathcal{M}= (\mathcal{F},V)$ will be simply denoted as  $\mathcal{M}= ({W}, (R_\sigma)_{\sigma\in\Sigma},V)$. For $s\in S$, $w\in W_s$ and  $\phi$ a formula of sort $s$, the  many-sorted  \textit{satisfaction relation} $\mathcal{M},w\mos{s} \phi$
is  defined by structural induction of formulas (see \cite{noi} for details). Moreover, let $s\in S$ and assume $\phi$ is a formula of sort $s$. Then $\phi$ is {\em satisfiable} if ${\mathcal M},w\mos{s}\phi$ for some model $\mathcal M$ and some $w\in W_s$.  The formula $\phi$ is {\em valid} in a model $\mathcal M$ if ${\mathcal M},w\mos{s}\phi$ for any $w\in W_s$; in this case we write ${\mathcal M}\mos{s}\phi$. 

The deductive system of ${\mathcal H}_{ \Sigma}$ is recalled in \ref{fig:k} and the completeness theorem is proved in \cite{noi}.

The hybridization of our many-sorted modal logic is developed  using a combination of ideas and techniques from \cite{hand,pureax,hyb,mod,goranko,goranko2}, but for this section we drew our inspiration mainly form \cite{hyb}. We refer to \cite{noi2} for some similar proofs of the results presented in this section.

Hybrid logic is defined on top of modal logic by  adding {\em nominals}, {\em states variables} and specific operators and binders.  Nominals  allow us to directly refer the worlds (states) of a model, since they are evaluated in singletons in any model. However, a nominal may refer different worlds in different models.
The sorts will be denoted by $s$, $t$, $\ldots$ and by ${\rm PROP}=\{{\rm PROP}_s\}_{s\in S}$, ${\rm NOM}=\{{\rm NOM}_s\}_{s\in S}$ and ${\rm SVAR}=\{{\rm SVAR}_s\}_{s\in S}$ we will denote some countable $S$-sorted sets. The elements of ${\rm PROP}$ are ordinary propositional variables and they will be denoted $p$, $q$,$\ldots$; the elements of ${\rm NOM}$ are called {\em nominals} and they will be denoted by $j$, $k$, $\ldots$; the elements of ${\rm SVAR}$ are called {\em state variables} and they are denoted $x$, $y$, $\ldots$.  We shall assume  that for any distinct sorts $s\neq t\in S$, the corresponding sets of propositional variables, nominals and state variables are distinct. A {\em state symbol} is a nominal or a state variable.

\begin{definition}[${\mathcal H}_{ \Sigma}(\forall)$ formulas] For any $s\in S$ we define the formulas of sort $s$:
 \vspace*{-0.2cm}
\begin{center}
$\phi_s :=  p\mid j\mid y_s\mid \neg \phi_s \mid\phi_s \vee \phi_s \mid \sigma(\phi_{s_1}, \ldots, \phi_{s_n})_s
  \mid  \forall x_t\, \phi_s$
\end{center}
\noindent Here, $p\in {\rm PROP}_s$, $j\in {\rm NOM}_s$, $t\in S$, $x\in {\rm SVAR}_t$, $y\in {\rm SVAR}_s$ and  $\sigma\in \Sigma_{s_1\cdots s_n,s}$. We also define the {dual binder} $\exists$: for any $s,t\in S$, if $\phi$ is a formula of sort  $s$ and $x$ is a state variable of sort $t$, then  $\exists x\, \phi := \neg\forall x\, \neg\phi$ is a formula of sort $s$.  The notions of {\sf free state variables} and {\sf bound state variables} are defined as usual.  
\end{definition}

In order to define the semantics for ${\mathcal H}_{ \Sigma}(@_z,\forall)$ more is needed. Given a model ${\mathcal M}=(W, (R_\sigma)_{\sigma\in\Sigma}, V)$, an {\em assignment} is an $S$-sorted function \mbox{$g : {\rm SVAR} \rightarrow W$.} If $g$ and $g'$ are assignment functions $s\in S$ and \mbox{$x\in \mbox{SVAR}_s$} then we say that $g'$ is an {\em $x$-variant} of $g$ (and we write $g'\stackrel{x}{\sim} g$)  if $g_t=g'_t$ for 
$t\neq s\in S$ and  $g_s(y)=g'_s(y)$ for any \mbox{$y\in \mbox{SVAR}_s$,} $y \neq x$. 

\begin{definition}[The satisfaction relation in  ${\mathcal H}_{ \Sigma}(\forall)$] In the sequel ${\mathcal M}=(W, (R_\sigma)_{\sigma\in\Sigma}, V)$ is a model and $g:{\rm SVAR}\to W$ an $S$-sorted assignment. The satisfaction relation is defined as follows:
\begin{itemize}
\item $\mathcal{M},g,w \mos{s} a$, if and only if $w\in V_s(a)$, where $a\in {\rm PROP_s}\cup {\rm NOM_s}$,
\item $\mathcal{M},g,w \mos{s} x$, if and only if $w=g_s(x)$, where $x\in {\rm SVAR}_s$,
\item $\mathcal{M},g,w \mos{s} \neg \phi$, if and only if $\mathcal{M},g,w \mosn{s}\phi$
\item $\mathcal{M},g,w \mos{s} \phi \vee \psi$, if and only if $\mathcal{M},g,w \mos{s} \phi$ or $\mathcal{M},g,w \mos{s} \psi$ 
\item if $\sigma\in\Sigma_{s_1\ldots s_n,S}$ then $\mathcal{M},g,w \mos{s} \sigma(\phi_1, \ldots , \phi_n )$, if and only if there is \\$(w_1,\ldots,w_n) \in W_{s_1}\times\cdots\times W_{s_n}$ such that  $R_{\sigma} ww_1\ldots w_n$ and $\mathcal{M},g,w_i  \mos{s_i} \phi_i$ for any $i \in [n]$,
 \item $\mathcal{M},g,w \mos{s} \forall x\,\phi$, if and only if  $\mathcal{M},g',w \mos{s} \phi$ for all $g'\stackrel{x}{\sim} g$.

Consequently, 
\item $\mathcal{M},g,w \mos{s} \exists x\, \phi$, if and only if $\exists g'( g' \stackrel{x}{\sim} g \ and \  \mathcal{M},g',w \mos{s} \phi)$.
\end{itemize}
 \end{definition}
 
 In order to define the axioms of our system, one more definition is needed.

We assume  $\#_s$ be a new propositional variable of sort $s$ and  
we inductively define $NC=\{NC_s\}_s$ by
\begin{itemize}
\item $\#_s,\top_s\in NC_s$ for any $s\in S$
\item if $\sigma\in \Sigma_{s_1\cdots s_n,s}$  and $\eta_i\in NC_{s_i}$ for any $i\in [n]$ then $\sigma(\eta_1,\ldots, \eta_n)\in NC_s$.
\end{itemize}

\noindent We further define $NomC=\{NomC_s\}_{s\in S}$ such that $\eta\in NomC_s$ iff $\eta\in NC_s$ and $|\{\#_s\mid s\in S, \#_s\,\in\,\, \eta\}|=1$. If  $\eta\in NomC_{s}$ then $\eta^{\mb}$ is its dual and  $\eta(\vp) \,:=\, \eta [\vp /\#_{s'}]$.

\begin{remark} If $\eta\in NomC_{s}$ and $\vp\in Form_{s'}$ then 
$\mathcal{M},g,w \mos{s}\eta(\vp)  \ \mbox{iff} \ \mathcal{M},h,w' \mos{s'} \vp$ for some $w'$ in the submodel generated by $\mathcal X$ where ${\mathcal X}_s=\{w\}$ and ${\mathcal X}_{t}=\emptyset$ for $t\neq s$. Dually, $\mathcal{M},g,w \mos{s}\eta^{\mb}(\vp)  \ \mbox{iff} \ \mathcal{M},h,w' \mos{s'} \vp$ for any $w'$ in the submodel generated by $\mathcal X$. 
\end{remark}

The deductive system is presented in Figure \ref{fig:k}.

\begin{figure}[h]
\centering
{\small
{\bf The system   ${\mathcal H}_{ \Sigma}$}
\begin{itemize}
\item For any $s\in S$, if $\phi$ is a formula of sort $s$ which is a theorem in propositional logic, then $\phi$ is an axiom. 
\item Axiom schemes: for any $\sigma\in \Sigma_{s_1\cdots s_n,s}$ and for any formulas $\phi_1,\ldots, \phi_n,\phi,\chi$, $\psi$ of appropriate sorts, the following formulas are axioms:

$\begin{array}{rl}
(K_\sigma) & \sigma^{\mb}(\ldots,\phi_{i-1},\phi\rightarrow\chi,\phi_{i+1}, \ldots)\to( \sigma^{\mb}(\ldots ,\phi_{i-1}, \phi, \phi_{i+1},\ldots) \to \sigma^{\mb}(\ldots ,\phi_{i-1}, \chi, \phi_{i+1},\ldots))\\
(Dual_\sigma)& \sigma (\phi_1,\ldots ,\phi_n )\leftrightarrow \neg \sigma^{\mb} (\neg \phi_1,\ldots ,\neg \phi_n )
\end{array}$
\item Deduction rules: \\
\begin{tabular}{rl}
$(MP)$ & if $\vds{s}\phi$ and $\vds{s}\phi\to \psi$ then 
$\vds{s}\psi$\\
$(UG)$ &  if $\vds{s_i}{\phi}$ then $\vds{s}\sigma^{\mb} (\phi_1, .. ,\phi, ..\phi_n)$
\end{tabular}
\end{itemize}

{\bf The system   ${\mathcal H}_{ \Sigma}(\forall)$}

\begin{itemize}
\item The axioms and the deduction rules of ${\mathcal K}_{\Sigma}$
\item Axiom schemes: for any $\sigma\in \Sigma_{s_1\cdots s_n,s}$ and for any formulas $\phi_1,\ldots, \phi_n,\phi,\psi$ of appropriate sorts, the following formulas are axioms:

$\begin{array}{rl}
(Q1) & \forall x\,(\phi \to \psi) \to (\phi \to \forall x\, \psi)
  \mbox{ where $\phi$ contains no free occurrences of x}\\
(Q2) & \forall x\,\phi \to \phi[y\slash x] 
  \mbox{ where $y$ is substitutable for $x$ in $\phi$}\\
(Name) & \exists x \, x \\ 
(Barcan) & \forall x \,\sigma^{\mb}(\phi_1, \ldots, \phi_n) \to \sigma^{\mb}(\phi_1, \ldots,\forall x\phi_{i},\ldots, \phi_n)\\
(Nom) & \forall x\, [\eta(x\wedge\phi)\to \theta^{\mb}(x\to\phi)] $, for any $s\in S$, $\eta$ and $\theta\in {NomC}_s$, $x\in {\rm SVAR}_{s'} 
\end{array}$
\item Deduction rules: \\
\begin{tabular}{rl}
$(Gen)$ & if $\vds{s}\phi$ then $\vds{s}\forall x\phi$, where $\phi\in Form_s$ and $x\in SVAR_t$ for some $t\in S$.
\end{tabular}
\end{itemize} }
\caption{$(S,\Sigma)$ hybrid logic}\label{fig:k}
\end{figure}
\medskip
\textit{Note}: The proofs for the following lemmas: Agreement Lemma, Substitution Lemma, Generalization on nominals are similar to the ones in \cite{noi2}.

\begin{lemma}[Agreement Lemma]
\label{lem:agree} 
Let $\mathcal{M}$ be a standard model. For all standard
$\mathcal{M}$-assignments $g$ and $h$, all states $w$ in $\mathcal{M}$ and all formulas $\phi$ of sort $s \in S$, if $g$ and $h$ agree on all state variables occurring freely in $\phi$, then:
$\mathcal{M},g,w \mos{s} \phi \ \mbox{iff} \ \mathcal{M},h,w \mos{s} \phi$
\end{lemma}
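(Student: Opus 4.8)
The plan is to proceed by structural induction on $\phi$, following the clauses of the satisfaction relation verbatim. Throughout, write $\mathrm{FV}(\phi)$ for the set of state variables occurring freely in $\phi$, and assume $g$ and $h$ agree on every variable in $\mathrm{FV}(\phi)$; I will prove the biconditional for $\phi$ assuming it for all immediate subformulas. The induction hypothesis must be stated for \emph{all} assignments agreeing on the relevant free variables, not just for the fixed pair $g,h$, since the binder case forces one to vary the assignment.

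For the base cases, the three atomic forms will dispatch immediately. If $\phi$ is a propositional variable $p$ or a nominal $j$, then $\mathcal{M},g,w \mos{s} \phi$ unfolds to $w \in V_s(\phi)$, which mentions no assignment, so the biconditional is trivial. If $\phi$ is a state variable $y$, then $\mathrm{FV}(\phi) = \{y\}$ and $\mathcal{M},g,w \mos{s} y$ holds iff $w = g_s(y)$; since $g$ and $h$ agree on $y$ we have $g_s(y) = h_s(y)$, so the two sides coincide.

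The Boolean and modal cases will be routine applications of the induction hypothesis, using $\mathrm{FV}(\neg\psi) = \mathrm{FV}(\psi)$, $\mathrm{FV}(\psi_1 \vee \psi_2) = \mathrm{FV}(\psi_1) \cup \mathrm{FV}(\psi_2)$, and $\mathrm{FV}(\sigma(\phi_1,\ldots,\phi_n)) = \bigcup_{i} \mathrm{FV}(\phi_i)$. In each case $g$ and $h$ still agree on the free variables of every immediate subformula, so the induction hypothesis applies to each, and the defining clause (negation, disjunction, or the existence of a witness $(w_1,\ldots,w_n)$ with $R_\sigma w w_1\cdots w_n$ and $\mathcal{M},g,w_i \mos{s_i} \phi_i$) transfers verbatim from $g$ to $h$.

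The one delicate case will be $\phi = \forall x\, \psi$, with $x \in \mathrm{SVAR}_t$ bound and $\mathrm{FV}(\forall x\,\psi) = \mathrm{FV}(\psi)\setminus\{x\}$; here the induction hypothesis cannot be invoked directly, because $g$ and $h$ need not agree on $x$, which \emph{is} free in $\psi$. The device I would use is to match up $x$-variants: given any $g' \stackrel{x}{\sim} g$, define $h'$ to be the $x$-variant of $h$ with $h'_t(x) := g'_t(x)$, and then verify that $g'$ and $h'$ agree on all of $\mathrm{FV}(\psi)$. They agree on $x$ by construction, and for any free $z \neq x$ of $\psi$ we have $z \in \mathrm{FV}(\forall x\,\psi)$, whence $g'(z) = g(z) = h(z) = h'(z)$, using that both $g'$ and $h'$ are $x$-variants and that $g,h$ agree on $z$. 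The induction hypothesis then yields $\mathcal{M},g',w \mos{s}\psi$ iff $\mathcal{M},h',w \mos{s}\psi$; since $g' \mapsto h'$ is a bijection between the $x$-variants of $g$ and those of $h$, quantifying over all variants gives $\mathcal{M},g,w \mos{s}\forall x\,\psi$ iff $\mathcal{M},h,w \mos{s}\forall x\,\psi$. This bookkeeping of $x$-variants, arranging that the induction hypothesis becomes applicable despite $g$ and $h$ possibly disagreeing on the bound variable, is the only real content of the proof and the step most prone to error.
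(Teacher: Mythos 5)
Your proof is correct, and it is essentially the paper's approach: the paper does not spell the argument out (it states that the proof is similar to the one in \cite{noi2}), and the standard proof being invoked there is exactly your structural induction, including the key step of matching each $x$-variant $g'$ of $g$ with the $x$-variant $h'$ of $h$ satisfying $h'_t(x)=g'_t(x)$ and strengthening the induction hypothesis to range over all pairs of assignments. Nothing is missing; the base, Boolean, modal, and binder cases are handled exactly as the deferred proof would handle them.
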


\begin{lemma}[Substitution Lemma]\label{lem:subst}
Let $\mathcal{M}$ be a standard model. For all standard
$\mathcal{M}$-assignments $g$, all states $w$ in $\mathcal{M}$ and all formulas $\phi$, if $y$ is a state variable that is substitutable for $x$ in $\phi$ and $j$ is a nominal then:
\begin{itemize}
\item \label{hunu}$\mathcal{M},g,w \mos{s} \phi[y/x]$ iff $\mathcal{M},g',w \mos{s} \phi$ where $g' \stackrel{x}{\sim} g$ and $ g'_s(x) =g_s(y)$
\item \label{hdoi} $\mathcal{M},g,w \mos{s} \phi[j/x]$ iff $\mathcal{M},g',w \mos{s} \phi$ where $g' \stackrel{x}{\sim} g$ and $ g'_s(x) =V_s(j)$
\end{itemize}
\end{lemma}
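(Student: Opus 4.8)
The plan is to prove the first item by structural induction on $\phi$, and to obtain the second item from the same induction, the two differing only in the atomic case for the substituted variable. Throughout, let $t$ denote the common sort of the state variables $x$ and $y$, so that the variant $g' \stackrel{x}{\sim} g$ alters only the value of $x$ in the $t$-indexed component, with $g'_t(x) = g_t(y)$ (this is the sort written $s$ in the statement). Because the satisfaction relation is defined by recursion on formula structure, the inductive skeleton is forced; the only genuine work is keeping the assignment bookkeeping straight in the presence of the binder.

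First I would dispatch the base and Boolean cases. For $\phi$ a propositional variable or a nominal, $\phi[y/x] = \phi$ and satisfaction does not depend on the assignment, so both sides coincide. For $\phi$ a state variable $z \in {\rm SVAR}_s$: if $z = x$ (forcing $t = s$) then $x[y/x] = y$, and $\mathcal{M}, g, w \mos{s} y$ iff $w = g_s(y) = g'_s(x)$ iff $\mathcal{M}, g', w \mos{s} x$; if $z \neq x$ then $z[y/x] = z$, and $g, g'$ agree at $z$, so both sides read $w = g_s(z)$. The connective cases $\neg\psi$ and $\psi_1 \vee \psi_2$ follow immediately from the induction hypothesis, and the modal case $\sigma(\psi_1, \ldots, \psi_n)$ follows by applying the induction hypothesis to each argument $\psi_i$ at the successor worlds $w_i$, since substitution distributes over $\sigma$ and the pair $(g, g')$ is unchanged as we descend into arguments of sort $s_i$.

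The heart of the argument is the binder case $\phi = \forall z\, \psi$, and this is where the substitutability hypothesis and the Agreement Lemma (Lemma~\ref{lem:agree}) enter. If $z = x$, then $x$ is not free in $\phi$, so $\phi[y/x] = \phi$, and since $g$ and $g'$ differ only at the non-free variable $x$, Lemma~\ref{lem:agree} gives the equivalence directly; the same Agreement argument settles the case where $x$ is not free in $\psi$. Otherwise substitutability forces $z \neq x$ and $z \neq y$, and $\phi[y/x] = \forall z\,(\psi[y/x])$. Here I would set up a bijection between the $z$-variants of $g$ and those of $g'$: to a $z$-variant $g''$ of $g$ associate the assignment $g'''$ obtained from $g''$ by resetting $x$ to $g''_t(y)$. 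Since $z \neq y$ we have $g''_t(y) = g_t(y) = g'_t(x)$, so $g''' \stackrel{x}{\sim} g''$ realises exactly the modification needed to invoke the induction hypothesis on $\psi$, while $g''' \stackrel{z}{\sim} g'$; this correspondence is a bijection, being the identity on the value assigned to $z$. Applying the induction hypothesis to $\psi$ at each $g''$ and matching the quantifier ranges through this bijection yields the equivalence.

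Finally, the second item is proved by the identical induction; the only new point is the atomic case $x[j/x] = j$, where in a standard model $V_s(j)$ is a singleton, so $\mathcal{M}, g, w \mos{s} j$ iff $w \in V_s(j)$ iff $w = g'_s(x)$ iff $\mathcal{M}, g', w \mos{s} x$. The binder case is in fact simpler for nominals, since a nominal can never be captured, so no side condition of the form $z \neq y$ is required. The main obstacle is thus the binder case of the first item: one must simultaneously keep the substitutability hypothesis honest (to exclude capture and thereby justify $z \neq y$), apply Lemma~\ref{lem:agree} in the degenerate subcases, and verify that the variant correspondence is a genuine bijection respecting the sort $t$ of $x$ and $y$ as distinct from the sort on which $z$ ranges.
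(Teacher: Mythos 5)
Your proof is correct, and it follows exactly the standard route that the paper itself takes: the paper gives no in-line proof, deferring to the structural-induction argument of \cite{noi2} (in the style of Blackburn--Tzakova), which is precisely the induction you carry out, with the Agreement Lemma handling the degenerate binder subcases and the variant-bijection handling the case $\forall z\,\psi$ with $z\neq x$, $z\neq y$. Your write-up in fact supplies the details the paper only cites.
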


%
%
%

\begin{lemma}[Generalization on nominals]\label{lem:gennom}
Assume \mbox{$\vds{s}\phi[i/x]$} where $i\in {\rm NOM}_t$ and $x\in {\rm SVAR}_t$ for some $t\in S$. Then there is a state variable $y\in {\rm SVAR}_t$ that does not appear in $\phi$ such that $\vds{s}\forall y\phi[y/x]$
\end{lemma}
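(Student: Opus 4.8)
The plan is to replace the nominal $i$ throughout a derivation by a fresh state variable and then to universally quantify. The single tool I rely on is a meta-level substitution principle for derivations, which I would establish first.

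First I would fix a concrete derivation $D$ witnessing $\vds{s}\phi[i/x]$. Since $D$ is finite and ${\rm SVAR}_t$ is countably infinite, I can choose $y\in{\rm SVAR}_t$ occurring neither in $\phi$ nor anywhere in $D$. As is standard for a generalization-on-nominals rule, I read the hypothesis with $i$ new for $\phi$, i.e.\ $i$ does not occur in $\phi$; this is precisely the proviso that makes the closing identity $(\phi[i/x])[y/i]=\phi[y/x]$ correct.

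The core step is the claim that \emph{if $\psi$ is derivable by a derivation in which the state variable $y$ does not occur, then $\psi[y/i]$ is derivable.} I would prove this by induction on the length of the derivation, checking that the operation $(\cdot)[y/i]$ maps every axiom instance to an axiom instance and commutes with every rule. Propositional tautologies are closed under the uniform renaming of the atom $i$ to $y$. The schemes $(K_\sigma)$ and $(Dual_\sigma)$ are schematic in their argument formulas, so an instance is sent to an instance. For $(Q1)$ the side condition ``$\phi$ has no free occurrence of $x$'' is preserved, since $i$ is a nominal and $y$ is fresh, so neither equals the bound $x$ and $\phi[y/i]$ creates no new free $x$; for $(Q2)$ substitutability survives because the fresh $y$ cannot be captured; $(Name)$ contains no nominal and is untouched; and $(Barcan)$ and $(Nom)$ are again schematic, with the $NomC$-contexts $\eta,\theta$ containing no nominals, so $(\cdot)[y/i]$ only rewrites the plugged-in formula and produces another instance. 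Finally $(MP)$, $(UG)$ and $(Gen)$ commute with $(\cdot)[y/i]$, the case of $(Gen)$ using that its bound variable differs from the fresh $y$.

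Applying this claim to $D$ yields $\vds{s}(\phi[i/x])[y/i]$, hence $\vds{s}\phi[y/x]$ by the identity above, and then one application of $(Gen)$ (legitimate since $\phi[y/x]\in Form_s$ and $y\in{\rm SVAR}_t$) gives $\vds{s}\forall y\,\phi[y/x]$, as required. The main obstacle is exactly the inductive claim: I must check that substituting the nominal $i$ by $y$ respects each side condition --- the ``no free $x$'' proviso of $(Q1)$, the substitutability proviso of $(Q2)$, and the well-formedness of the $NomC$-contexts in $(Nom)$ --- and that the freshness of $y$ forbids any capture when the substitution passes a $\forall$. The bookkeeping identity $(\phi[i/x])[y/i]=\phi[y/x]$ likewise depends on $i$ being new for $\phi$, which is why that novelty is essential.
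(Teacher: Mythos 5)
Your proof is correct and is essentially the paper's own argument, which is deferred to \cite{noi2} (following \cite{hyb}): fix a derivation of $\phi[i/x]$, rename the nominal $i$ to a state variable $y$ fresh for both $\phi$ and the derivation, check by induction that every axiom instance and rule application survives the renaming, and finish with one application of $(Gen)$. Your explicit proviso that $i$ does not occur in $\phi$ is the intended reading of the lemma rather than an extra assumption --- without it the key identity $(\phi[i/x])[y/i]=\phi[y/x]$ breaks and the statement is in fact false (take $\phi = x\to i$, so that $\phi[i/x]$ is a tautology while $\forall y\,(y\to i)$ is not valid) --- so flagging it is correct, not a gap.
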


Following the construction of the canonical model of {$\mathbf{K}\forall$} we define
$\mathcal{M}^{\mathbf{K}\forall}=({W}^{\mathbf{K}\forall},{R}^{\mathbf{K}\forall}, V^{\mathbf{K}\forall})$  as follows: (1) for any $s\in S$,  ${W}^{\mathbf{K}\forall}_s=\{ \Phi \subseteq Form_s \mid \Phi \mbox{ is maximal ${\mathbf K}\forall$-consistent \ set} \}  $, 
(2) for any  $\srb \in \Sigma_{s_1 \ldots s_n,s}, w\in W^{\mathbf{K}\forall}_s, u_1\in W^{\mathbf{K}\forall}_{s_1},\ldots, u_n\in W^{\mathbf{K}\forall}_{s_n}$ we define ${R}^{\mathbf{K}\forall}_{\srb} wu_1\ldots u_n$ iff $\srb(\psi_1,\ldots , \psi_n)\in w$ implies $\psi_1 \in u_1, \ldots, \psi_n\in u_n$, (3) for every propositional symbol or nominal $a$, $V^{\mathbf{K}\forall}= \lbrace V^{\mathbf{K}\forall}_s\rbrace_{s\in S}$ is the valuation defined by 
$V^{\mathbf{K}\forall}_s(a) = \lbrace w \in W^{\mathbf{K}\forall}_s |\  a\in w \rbrace$ for any $s\in S$. Note that  $V^{\mathbf{K}\forall}_s(a)$ might be empty or might contain more that one element. We address these issues in the rest of this section.

\begin{definition}[Witnessed Sets]
 Let $s\in S$ and $\Gamma_s$ a maximal ${\mathbf K}\forall$-consistent set. $\Gamma_s$ is called witnessed iff for any ${\mathbf K}\forall$-formula of the form $\exists x \phi$ with $\phi\in Form_s$ there is a nominal $j$ having the same sort as $x$ such that $\exists x \phi \ri \phi[j/x]\in \Gamma_s$.

\end{definition}
\begin{lemma}[Extended Lindenbaum Lemma]\label{extendLinden}
Let  ${\mathbf K}\forall$ and  ${\mathbf K}\forall^+$ be two countable languages such that  ${\mathbf K}\forall^+$ is  ${\mathbf K}\forall$ extended with a countably infinite set of new nominals. Then every consistent set of  ${\mathbf K}\forall$-formulas, $\Gamma_s$, can be extended to a witnessed maximal  ${\mathbf K}\forall^+$-consistent set, $\Gamma_s^+$. 
\end{lemma}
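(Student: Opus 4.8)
The plan is to combine the classical Lindenbaum construction with a Henkin-style witnessing argument, adapted to the many-sorted hybrid setting. I would work sort by sort, fixing the sort $s$ of the given consistent set $\Gamma_s$. The key preparatory observation is that the extended language ${\mathbf K}\forall^+$ supplies a countably infinite reservoir of fresh nominals in every sort; since the language is countable, I can enumerate all ${\mathbf K}\forall^+$-formulas of sort $s$ as $\psi_0,\psi_1,\ldots$, and I reserve the new nominals so that each can be used as a witness at most once.

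First I would build an increasing chain of consistent sets $\Gamma_s=\Gamma^0\subseteq\Gamma^1\subseteq\cdots$ by the usual step: at stage $n+1$, if $\Gamma^n\cup\{\psi_n\}$ is inconsistent, set $\Gamma^{n+1}=\Gamma^n\cup\{\neg\psi_n\}$; otherwise put $\psi_n$ in. The crucial extra clause is the witnessing one: whenever $\psi_n$ is added and has the form $\exists x\,\phi$ (with $\phi\in Form_s$ and $x\in {\rm SVAR}_t$), I simultaneously adjoin the formula $\exists x\,\phi\ri\phi[j/x]$ for a fresh nominal $j\in {\rm NOM}_t$ of the matching sort that has not yet occurred in $\Gamma^n$ or in $\psi_n$. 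I then take $\Gamma_s^+=\bigcup_n\Gamma^n$. By construction $\Gamma_s^+$ is maximal (it decides every sort-$s$ formula) and witnessed (every existential it contains has its Henkin witness). Maximality and the definition of the witnessing clause give exactly the property required by Definition of Witnessed Sets.

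The main obstacle, and the step deserving the most care, is showing that adding the witness $\exists x\,\phi\ri\phi[j/x]$ preserves consistency. Suppose for contradiction that $\Gamma^n\cup\{\exists x\,\phi,\ \exists x\,\phi\ri\phi[j/x]\}$ is inconsistent; then $\Gamma^n\cup\{\exists x\,\phi\}$ proves $\neg(\exists x\,\phi\ri\phi[j/x])$, hence it proves $\phi[j/x]\ri\bot$, i.e. $\vds{s}\neg\phi[j/x]$ modulo $\Gamma^n$ and $\exists x\,\phi$. Here is where Generalization on Nominals (Lemma~\ref{lem:gennom}) does the work: because $j$ is fresh and does not occur in $\Gamma^n$, $\phi$, or $\exists x\,\phi$, I can replace $j$ by a fresh state variable $y$ and universally quantify, obtaining $\vds{s}\forall y\,\neg\phi[y/x]$ relative to $\Gamma^n\cup\{\exists x\,\phi\}$. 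By $(Q2)$ and the duality defining $\exists$, $\forall y\,\neg\phi[y/x]$ is provably equivalent to $\neg\exists x\,\phi$, which contradicts the assumed consistency of $\Gamma^n\cup\{\exists x\,\phi\}$. Thus the witnessing step is safe, and each $\Gamma^n$ remains consistent.

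Finally I would verify the two global properties. Consistency of $\Gamma_s^+$ follows because consistency is preserved at every finite stage and any inconsistency would be witnessed by a finite subset, hence already present at some $\Gamma^n$; maximality follows since every formula $\psi_n$ is decided at stage $n+1$. The freshness bookkeeping is what keeps the whole chain coherent: each new nominal is used for a single witness, so the countably infinite supply in ${\mathbf K}\forall^+$ never runs out, and no clash between witnesses can reintroduce inconsistency. Assembling these facts yields the witnessed maximal ${\mathbf K}\forall^+$-consistent extension $\Gamma_s^+\supseteq\Gamma_s$ claimed by the lemma.
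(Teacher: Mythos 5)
Your proposal is correct and follows essentially the same route as the paper: an enumeration of ${\mathbf K}\forall^+$-formulas, an increasing chain that adjoins a Henkin witness with a fresh nominal at each existential step, and a consistency-preservation argument that compacts the finitely many hypotheses involved, applies Generalization on Nominals (Lemma~\ref{lem:gennom}), and uses the quantifier axioms to derive $\forall x\,\neg\psi$, contradicting the presence of $\exists x\,\psi$. The only differences are cosmetic: you add $\neg\psi_n$ at rejected stages and the implication $\exists x\,\phi\to\phi[j/x]$ rather than the instance $\phi[j/x]$ itself, and you leave implicit the paper's preliminary observation that $\Gamma_s$ remains consistent in the extended language.
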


\begin{proof}
Let $E_n=\{j_1, j_2, j_3 \ldots \}$ be an enumeration of the set of all new nominals that are in  ${\mathbf K}\forall^+$, and let $E_f=\{ \phi_1, \phi_2, \phi_3 \ldots \}$ be an enumeration of all  ${\mathbf K}\forall^+$-formulas. We define inductively the maximal  ${\mathbf K}\forall^+$-consistent set $\Gamma_s^+$ for any $s\in S$.

Let $\Gamma_s^0 =\Gamma_s$. $\Gamma_s^0$ contains no nominals from $E_n$, therefore it is consistent when regarded as a set of  ${\mathbf K}\forall^+$-formulas. To prove this, let us suppose that we can prove $\bot_s $ by making use of nominals from $E_n$, then by replacing all the $E_n$ nominals in such a proof with state variables from  ${\mathbf K}\forall$ , we get a proof of $\bot_s$ in  ${\mathbf K}\forall$ , which is a contradiction. 

We define $\Gamma_s ^n$ as follows. If $\Gamma_s^n\cup \{ \phi_n\}$ is inconsistent, then $\Gamma_s^{n+1}=\Gamma^n$. Otherwise:
\begin{itemize}
\item[1)] $\Gamma_s^{n+1} = \Gamma_s^n \cup \{\phi_n\}$, if $\phi_n$ is not of the form $\exists x \psi$
\item[2)] $\Gamma_s^{n+1} = \Gamma_s^n \cup \{\phi_n\} \cup \{\psi[j/x] \}$, if $\phi_n =\exists x \psi $ and $j$ is the the first nominal in the enumeration $E_n$ which is not used in the definitions of $\Gamma_s^i$ for all $i \leq n$ and also does not appear in $\phi_n$.
\end{itemize}

Let $\Gamma_s^+ =\bigcup_{n\geq 0} \Gamma_s^n$. By construction $\Gamma_s^+$ is maximal and witnessed and we need to prove that it is consistent. Let us suppose that $\Gamma_s^+$ is inconsistent, therefore for some $n \geq 0$, $\Gamma_s^n$ is inconsistent. But we will prove that all $\Gamma_s^n$ are consistent. Hence, we need to prove that expansion using $2)$ preserve consistency. Suppose  $\Gamma_s^{n+1} = \Gamma_s^n \cup \{\phi_n\} \cup \{\psi[j/x] \}$ is inconsistent, where $\phi_n = \exists x \psi$. Then there is a formula $\chi$ which is a conjunction of a finite number of formulas from  $ \Gamma_s^n \cup \{\phi_n\}$, such that \mbox{$\vds{s} \chi \ri \neg \psi[j/x] $.} By Lemma \ref{lem:gennom} we can prove that \mbox{$\vds{s} \forall y( \chi \ri \neg \psi[j/x]) $}, for some state variable $y$ that does not occur in $ \chi \ri \neg \psi[j/x]$. Therefore by $(Q1)$ we get \mbox{$\vds{s}  \chi \ri \forall y\neg \psi[y/x] $.} Hence \mbox{$  \Gamma_s^n \cup \{\phi_n\} \vds{s}   \forall y\neg \psi[y/x] $}, and by Lemma \ref{lem:subst} we obtain \mbox{$  \Gamma_s^n \cup \{\phi_n\} \vds{s}   \forall x\neg \psi $}. But $\phi_n = \exists x \psi$, and this contradicts the consistency of $  \Gamma_s^n \cup \{\phi_n\}$.
\end{proof}

\begin{definition}[Witnessed Models]

Let $\mathcal{M}_{\mathbf{K}\forall}^{wit}$ be the {\em witnessed canonical model} which is defined as the canonical model, but only witnessed maximal consistent sets are considered, i.e. all the relations, as well as the valuation are restricted and co-restricted to witnessed maximal consistent sets. 
\end{definition}

                
\begin{lemma}
Let  $\mathcal{M}^{\mathbf{K}\forall}=({W}^{\mathbf{K}\forall}, {R}^{\mathbf{K}\forall}, V^{\mathbf{K}\forall})$ be a canonical model, $\Upsilon$ be a  witnessed maximal consistent set of sort $s$, where $\Upsilon \in \mathcal{W}^{\mathbf{K}\forall}_s $ and let $\mathcal{M}_{\mathbf{K}\forall}^{wit}= ({W}^{wit},{R}^{wit},{V}^{wit})$ be the witnessed submodel of $\mathcal{M}^{\mathbf{K}\forall}$ generated by $\Upsilon$. For any $t\in S$, any state symbol $x\in {\rm SVAR}_t$ and for all witnessed maximal consistent sets $\Gamma$ and $\Delta$ in $W^{wit}_t$, if $x \in \Gamma \cap \Delta$, the $\Gamma = \Delta$. 
\end{lemma}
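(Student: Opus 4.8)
The statement is the singleton (collapse) property for state symbols, which is the hybrid-logic heart of the completeness argument, so I would follow the classical route organised around the $(Nom)$ axiom. Since $\Gamma$ and $\Delta$ are both maximal consistent, it suffices to prove one inclusion, say $\Gamma \subseteq \Delta$: the converse is symmetric, and two such inclusions force $\Gamma = \Delta$. So I would fix an arbitrary $\phi \in \Gamma$; because $x \in \Gamma$ as well and maximal consistent sets are closed under conjunction, $x \wedge \phi \in \Gamma$, and the whole task reduces to transporting $\phi$ across to $\Delta$ using the fact that $x$ names a single world.

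The transport is mediated by the root $\Upsilon$ and by the single-hole formulas of $NomC_s$. First I would encode the position of $\Gamma$ inside the generated submodel: since $\Gamma \in W^{wit}_t$ lies in the submodel generated by $\Upsilon \in W^{wit}_s$, there is a finite chain of $R_\sigma$-steps from $\Upsilon$ to $\Gamma$, and I would convert this chain into a formula $\eta \in NomC_s$ whose unique occurrence of $\#$ sits at sort $t$, in such a way that $\eta(x \wedge \phi) \in \Upsilon$. This is an instance of the Existence Lemma for the underlying system $\mathbf{K}$ from \cite{noi}: reachability together with membership of $x \wedge \phi$ in the endpoint lifts, one $R_\sigma$-step at a time, to membership of the corresponding diamond formula in the predecessor, and eventually in $\Upsilon$. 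In the same way I would record the position of $\Delta$ by a (possibly different) $\theta \in NomC_s$ with its hole at sort $t$, witnessing that $\Delta$ is reached from $\Upsilon$ along the $\theta$-path.

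The $(Nom)$ axiom then performs the collapse. The instance $\forall x\,[\eta(x \wedge \phi) \to \theta^{\mb}(x \to \phi)]$ is an axiom, hence a theorem, hence a member of the maximal consistent set $\Upsilon$; stripping the binder by $(Q2)$ with $x$ substituted for itself gives $\eta(x \wedge \phi) \to \theta^{\mb}(x \to \phi) \in \Upsilon$. Since $\eta(x \wedge \phi) \in \Upsilon$ by the previous paragraph, and $\Upsilon$ is closed under modus ponens, we obtain $\theta^{\mb}(x \to \phi) \in \Upsilon$. Finally I would run the dual of the reachability argument: because $\Delta$ is reached from $\Upsilon$ along the $\theta$-path and $\theta^{\mb}(x \to \phi) \in \Upsilon$, the box-version of the Existence Lemma for $\mathbf{K}$ (the membership-level counterpart of the dual clause in the Remark, propagating $\theta^{\mb}(\cdot)$ down the single-hole path) yields $x \to \phi \in \Delta$. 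As $x \in \Delta$, modus ponens delivers $\phi \in \Delta$. Since $\phi \in \Gamma$ was arbitrary, $\Gamma \subseteq \Delta$, hence $\Gamma = \Delta$.

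The genuinely technical point, and the step I expect to be the main obstacle, is the correspondence between reachability in the generated submodel and membership of $NomC_s$ formulas in the root (the forward use for $\eta$ and the dual use for $\theta$). Carrying it out requires an induction on the length of the accessibility chain in which, at each $R_\sigma$-step, one selects the single branch leading towards the target, fills all other argument positions with $\top$, and keeps the sort of the hole aligned with the sort $t$ of $x$, $\Gamma$ and $\Delta$; this is exactly where the design of $NomC_s$ (a single occurrence of $\#$) and the polyadic Existence Lemma of $\mathbf{K}$ do the work. The argument is written for a state variable $x$; for $x$ a nominal it is entirely analogous, the only change being that the naming formula $x$ is read through $V$ rather than through the assignment.
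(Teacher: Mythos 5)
Your proof is correct, and it pivots on the same two ingredients as the paper's own argument: the $(Nom)$ axiom stripped by $(Q2)$ inside the maximal consistent set $\Upsilon$, and the correspondence between reachability in the generated submodel and membership of $NomC_s$-formulas at the root. The global shape, however, is genuinely different. The paper argues by contradiction: from $\phi\in\Gamma$ and $\neg\phi\in\Delta$ it lifts \emph{both} memberships upward, obtaining $\eta_1(x\wedge\phi)\in\Upsilon$ and $\eta_2(x\wedge\neg\phi)\in\Upsilon$, and then $(Nom)$ turns the first into $\eta_2^{\mb}(x\to\phi)$, which is (up to propositional equivalence) $\neg\eta_2(x\wedge\neg\phi)$, so the clash happens entirely inside $\Upsilon$; only the upward (diamond) direction of the correspondence is ever used, twice, and nothing is pushed back down the path. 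You instead prove the inclusion $\Gamma\subseteq\Delta$ directly, lifting only $\Gamma$'s formula upward and then propagating $\theta^{\mb}(x\to\phi)$ \emph{downward} along the path to $\Delta$, finishing with modus ponens against $x\in\Delta$. That downward step is an extra proof obligation the paper avoids altogether: at each $R_\sigma$-step it needs the disjunctive characterization of the canonical relation (if $\sigma^{\mb}(\psi_1,\ldots,\psi_n)\in w$ and $R_\sigma wu_1\ldots u_n$ then $\psi_i\in u_i$ for some $i$, i.e.\ Lemma 2.18 of \cite{noi}, which this paper invokes only in its Existence Lemma), together with consistency of the sets along the path to discard the $\bot$'s that appear in the non-hole argument positions once the $\top$-filled context is dualized. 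You flagged the path-to-context correspondence as the crux but did not mention this $\bot$-elimination; it does go through, so it is a missing detail rather than a gap. In net terms, the paper's contradiction is the leaner argument (one direction of the correspondence, no dual propagation), while yours is more informative, since it exhibits how an arbitrary $\phi$ is actually transported from $\Gamma$ to $\Delta$ through the state named by $x$. One terminological slip: the upward lifting you attribute to the ``Existence Lemma'' of the base system is really just the easy direction of the definition of the canonical relation; the Existence Lemma proper goes the opposite way, from a diamond formula in $w$ to the existence of suitable successors.
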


\begin{proof}
Suppose that $\Gamma$ and $\Delta$ are different, then there is a formula $\phi$ such that $\phi \in \Gamma$ and $\phi \not \in \Delta$. But $\Delta$ and $\Gamma$ are maximal consistent sets, therefore, we get $\phi \in \Gamma$ and $\neg \phi \in \Delta$. From hypothesis, we have $x \in SVAR_t$, where $x \in \Gamma \cap \Delta$. Thus, $x\wedge \phi\in \Gamma$ and $x \wedge \neg \phi \in \Delta$. Recall that $\Gamma$ and $\Delta$ belong to the generated submodel, therefore, exists $\eta_1, \eta_2 \in NC_s$ such that $\eta_1(x\wedge \phi)\in \Upsilon$ and $\eta_2(x\wedge \neg \phi)\in \Upsilon$. As $\Upsilon$ contains every instance of a $Nom$ schema, for some state variable $y \in SVAR_t$ that does not occur freely in $\phi$, $\forall y (\eta_1(y \wedge \phi) \ri \eta^{\mb}_2(y \ri \phi))\in \Upsilon$. Suppose that $x$ is substitutable for $y$ in $\phi$. By $Q2$, we get $\eta_1(x \wedge \phi) \ri \eta^{\mb}_2(x \ri \phi)\in \Upsilon$. But $ \eta_1(x \wedge \phi)\in \Upsilon$, therefore $\eta^{\mb}_2(x \ri \phi)\in \Upsilon$. So, we have $\neg \eta_2(x \wedge \neg \phi)\in \Upsilon$ and  $ \eta_2(x \wedge \neg \phi)\in \Upsilon$, which contradicts that $\Upsilon$ is a maximal consistent set. We conclude that $\Gamma=\Delta$. 
\end{proof}

Recall that to have a standard model we need a model in which every nominal is true at exactly one state. Until now, from the previous lemma we know that the nominals are contained in at most one maximal consistent set in a witnessed model. Therefore, whenever we have a witnessed model $\mathcal{M}_{\mathbf{K}\forall}^{wit}$ such that some state variable does not occur in any maximal consistent set in  $\mathcal{M}_{\mathbf{K}\forall}^{wit}$, we will complete the model by adding a new dummy state symbol $\star$.

\begin{definition}
Let $\mathcal{M}_{\mathbf{K}\forall}^{wit}=({W}^{wit}, {R}^{wit},{V}^{wit}) $ be a witnessed model generate by the witnessed maximal consistent set $\Upsilon$. For any $t \in S$ and any $x \in SVAR_t$ if there exists a maximal consistent set $\Delta \in W^{wit}_t$ such that $x\in \Delta$, then the completed model  $\mathcal{M}_{\star}^{wit}$ is simply $\mathcal{M}_{\mathbf{K}\forall}^{wit}$. Otherwise, $W^{wit \star}_t = W^{wit}_t \cup \{\star _t\} $ and $R^{wit \star}= R^{wit} \cup \{(\star_t, \Upsilon)\ |\  t\in S, \star_t \in W^{wit \star}_t \}$. For all propositional symbols $p$, $V^{wit \star}_t(p)= V^{wit}_t(p)$ and for all nominals $j$, $V^{wit \star}_t(j)=\{\Gamma_t \in \mathcal{M}_{\mathbf{K}\forall}^{wit}\ |\  j \in \Gamma_t\}$ if this set is not empty, and $V^{wit \star}_t(j)=\{\star \} $ otherwise. For all state variables $x \in SVAR_t$, $g^{wit \star}_t(x)=\{ \Gamma_t \in \mathcal{M}_{\mathbf{K}\forall}^{wit}\ |\  x \in \Gamma_t\}$ if this set is not empty, and $g^{wit \star}_t(x)=\{\star \} $ otherwise.

\end{definition}
%
%
%
%
%
%

\begin{lemma}\label{existsfree}
Let $\phi$ and $\chi$ be formulas and $x$ and $y$ state variables such that $y$ is substitutable for $x$ in $\chi$, and $y$ does not have free occurrences in either $\phi$ or $\chi$. Then for any sort $s \in S$ and any $\vp_i \in Form _{s_i}$, for $i \in [n]$ and $i \not = t$, we have that: \\
$\vds{s} \srb (\vp_1, \ldots,\vp_{{t-1}}, \phi, \vp_{{t+1}}, \ldots, \vp_{n}) \to \exists y \srb (\vp_1, \ldots,\vp_{{t-1}},(\exists x \chi \to \chi[y/x]) \wedge \phi, \vp_{{t+1}}, \ldots, \vp_{n})$.

\end{lemma}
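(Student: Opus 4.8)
The plan is to reduce the claim to two essentially independent facts: a purely quantificational \emph{witness theorem} $\vds{s_t}\exists y\,(\exists x\,\chi\ri\chi[y/x])$, and the converse Barcan direction for the diamond $\srb$; together with monotonicity of $\srb$ these let me insert the witness conjunct into the $t$-th argument and then pull the existential quantifier outside the modality. Throughout I write $\Theta:=\exists x\,\chi\ri\chi[y/x]$ for the witness formula, which has the same sort $s_t$ as $\phi$, so that $\Theta\wedge\phi$ is a legal $t$-th argument of $\srb$.

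First I would establish the witness theorem $\vds{s_t}\exists y\,\Theta$ by contradiction, starting from $\forall y\,\neg\Theta$, i.e. from $\forall y\,(\exists x\,\chi\wedge\neg\chi[y/x])$. Distributing $\forall y$ over the conjunction (routine from $Q1$ and $Gen$) gives both $\forall y\,\exists x\,\chi$ and $\forall y\,\neg\chi[y/x]$. Since $y$ has no free occurrence in $\chi$, hence none in $\exists x\,\chi$, the first conjunct yields $\exists x\,\chi$ by $Q2$ (vacuous quantification). Because $y$ is substitutable for $x$ in $\chi$ and not free in $\chi$, the bound-renaming identity $\forall y\,(\neg\chi)[y/x]\leftrightarrow\forall x\,\neg\chi$ (itself derivable from $Q1$, $Q2$, $Gen$) turns the second conjunct into $\forall x\,\neg\chi=\neg\exists x\,\chi$, contradicting the first. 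Hence $\forall y\,\neg\Theta$ is inconsistent and $\vds{s_t}\exists y\,\Theta$.

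Next, using that $y$ is not free in $\phi$ together with the witness theorem, routine quantifier manipulation gives $\vds{s_t}\phi\ri\exists y\,(\Theta\wedge\phi)$: from the tautology $\Theta\ri(\phi\ri(\Theta\wedge\phi))$ and $\exists$-monotonicity one obtains $\exists y\,\Theta\ri\exists y\,(\phi\ri(\Theta\wedge\phi))$, and since $y$ is not free in $\phi$ the right-hand side is provably equivalent to $\phi\ri\exists y\,(\Theta\wedge\phi)$; discharging $\exists y\,\Theta$ by the witness theorem finishes this step. Monotonicity of $\srb$ in its $t$-th argument --- derivable from $K_\srb$, $Dual_\srb$ and $UG$ in the standard way --- then upgrades this to $\vds{s}\srb(\vp_1,\ldots,\phi,\ldots,\vp_n)\ri\srb(\vp_1,\ldots,\exists y\,(\Theta\wedge\phi),\ldots,\vp_n)$.

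It remains to commute the quantifier with $\srb$, i.e. to prove the converse Barcan law $\vds{s}\srb(\vp_1,\ldots,\exists y\,\psi,\ldots,\vp_n)\ri\exists y\,\srb(\vp_1,\ldots,\psi,\ldots,\vp_n)$ with $\psi=\Theta\wedge\phi$. I obtain this from the $(Barcan)$ axiom by instantiating it at $\neg\vp_1,\ldots,\neg\psi,\ldots,\neg\vp_n$, rewriting the boxes via $Dual_\srb$ so that $\spt(\neg\vp_1,\ldots,\neg\psi,\ldots,\neg\vp_n)=\neg\srb(\vp_1,\ldots,\psi,\ldots,\vp_n)$ and $\forall y\,\neg\psi=\neg\exists y\,\psi$, and then contraposing; this yields exactly the displayed implication. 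Chaining it after the monotonicity step by transitivity of $\ri$ produces the stated formula. I expect the main obstacle to be the side condition on the remaining arguments: both this converse Barcan step and soundness itself require $y$ to have no free occurrence in the $\vp_j$ for $j\neq t$ (a short semantic check shows the implication fails otherwise --- for instance when some $\vp_j$ is the state variable $y$, an existing $\srb$-successor need not survive the reassignment of $y$). The statement is therefore to be read with $y$ chosen fresh for the whole context; under that reading the hypotheses that $y$ is substitutable for $x$ in $\chi$ and not free in $\phi$ or $\chi$ suffice, and the argument above goes through.
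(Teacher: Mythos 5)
Your proof is correct and is essentially the argument the paper itself points to: the paper's ``proof'' is only the remark that it is similar to the one in \cite{hyb}, and that proof is exactly your decomposition --- the witness theorem $\vds{s_t}\exists y\,(\exists x\,\chi\ri\chi[y/x])$, insertion of the witness conjunct under the modality by monotonicity of $\srb$, and the diamond (converse) form of $(Barcan)$ obtained by dualizing and contraposing.

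Your closing caveat is moreover a genuine catch, not a pedantic one. In the polyadic setting the lemma as printed (and the unrestricted $(Barcan)$ scheme your last step instantiates) is not sound unless $y$ is fresh for the side arguments $\vp_j$, $j\neq t$. Concretely: take $\srb$ binary on a single sort, $t=2$, $\vp_1=y$, $\phi=\top$ and $\chi=x$, so that $\exists x\,\chi$ is the valid formula $\exists x\,x$ and $(\exists x\,\chi\ri\chi[y/x])\wedge\phi$ is equivalent to $y$; in the model with $R_\srb=\{(w,a,b)\}$, $a\neq b$, and $g(y)=a$, the antecedent $\srb(y,\top)$ holds at $w$, but $\exists y\,\srb(y,y)$ fails there, since no successor pair has both coordinates equal. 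All hypotheses of the printed statement ($y$ substitutable for $x$ in $\chi$, $y$ not free in $\phi$ or $\chi$) are met, so freshness of $y$ with respect to the $\vp_j$ must indeed be added, exactly as you say. This costs nothing where the lemma is used --- in the Existence Lemma $y$ can always be chosen fresh for the whole context --- and the issue is invisible in \cite{hyb} because the modality there is unary, so there are no side arguments at all.
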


\begin{proof}
The proof is similar to the one in \cite{hyb}.
\end{proof}

\begin{lemma}[Existence Lemma for Witnessed Models]\label{lem:existwit}
Let $w$ be a witnessed maximal consistent set. If $\srb (\phi_1, \ldots, \phi_n) \in w$ then there exists witnessed maximal consistent sets $u_i$ such that ${R}^{\mathbf{K}\forall}_{\srb} wu_1\ldots u_n$ and $\phi_i \in u_i$ for any $i\in [n]$. 
\end{lemma}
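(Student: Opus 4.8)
This is the polyadic, ``witnessed'' analogue of the classical Existence Lemma for canonical models, so the plan is to produce, for each coordinate $i\in[n]$, a witnessed maximal consistent set $u_i$ with $\phi_i\in u_i$, built so that the accessibility condition $R^{\mathbf{K}\forall}_{\srb}wu_1\ldots u_n$ holds. I would construct the $n$ sets \emph{simultaneously}, running along a single enumeration of all pairs $(i,\beta)$ with $i\in[n]$ and $\beta\in Form_{s_i}$, while keeping finite conjunctions $\gamma_1,\ldots,\gamma_n$ recording the current content of each coordinate, together with the invariant
\[
\srb(\gamma_1,\ldots,\gamma_n)\in w .
\]
Initially $\gamma_i:=\phi_i$, so the invariant holds by hypothesis, and I would take $u_i$ to be the set of formulas ultimately kept in coordinate $i$.

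For the maximality steps I would use that $\srb$ is additive in each argument, a standard consequence of $(K_\srb)$, $(Dual_\srb)$ and $(UG)$, which yields monotonicity together with $\srb(\ldots,a\vee b,\ldots)\lra\srb(\ldots,a,\ldots)\vee\srb(\ldots,b,\ldots)$. When the enumeration reaches a formula $\beta$ in coordinate $i$, the tautology $\gamma_i\lra(\gamma_i\wedge\beta)\vee(\gamma_i\wedge\neg\beta)$ and additivity give $\srb(\ldots,\gamma_i,\ldots)\lra\srb(\ldots,\gamma_i\wedge\beta,\ldots)\vee\srb(\ldots,\gamma_i\wedge\neg\beta,\ldots)$; as the left-hand side lies in the maximal consistent set $w$, one disjunct does too, and I keep the corresponding choice of $\beta$ or $\neg\beta$. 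Since $\neg\srb(\ldots,\bot,\ldots)$ is a theorem (via $(UG)$ on $\top$ and $(Dual_\srb)$), the invariant also forces each coordinate to stay consistent, so each $u_i$ comes out maximal consistent.

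The hard part will be the \emph{witnessing} steps that make every $u_i$ witnessed, and this is exactly where Lemma~\ref{existsfree} and the hypothesis that $w$ is witnessed enter. Suppose the formula just kept in coordinate $i$ is an existential $\exists x\,\chi$; to keep $u_i$ witnessed I must also insert $\chi[j/x]$ for a suitable nominal $j$ without disturbing the invariant. Applying Lemma~\ref{existsfree} in the $i$-th argument, with the remaining arguments instantiated to the current $\gamma_k$ and with $\phi:=\gamma_i$, gives
\[
\vds{s}\srb(\ldots,\gamma_i,\ldots)\ri\exists y\,\srb(\ldots,(\exists x\,\chi\ri\chi[y/x])\wedge\gamma_i,\ldots),
\]
so the right-hand side belongs to $w$; since $w$ is witnessed, some nominal $j$ discharges the leading $\exists y$, yielding $\srb(\ldots,(\exists x\,\chi\ri\chi[j/x])\wedge\gamma_i,\ldots)\in w$. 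Because $\exists x\,\chi$ is already a conjunct of $\gamma_i$, monotonicity of $\srb$ lets me strengthen this to $\srb(\ldots,\chi[j/x]\wedge\gamma_i,\ldots)\in w$, so the witness $\chi[j/x]$ may be added with the invariant intact. The whole difficulty of the lemma is precisely this reconciliation: the invariant carries the $R^{\mathbf{K}\forall}_{\srb}$-relatedness, witnessedness keeps forcing fresh material into the $\srb$-argument, and Lemma~\ref{existsfree} (together with the witnessedness of $w$) is what keeps the two compatible; the bookkeeping that guarantees every existential eventually gets a witness is the Henkin enumeration already employed in Lemma~\ref{extendLinden}.

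Finally, each $u_i$ is a witnessed maximal consistent set containing $\phi_i$. For any selection $\psi_1\in u_1,\ldots,\psi_n\in u_n$ the $n$ formulas are all kept by some common finite stage, where the invariant reads $\srb(\gamma_1,\ldots,\gamma_n)\in w$ with every $\gamma_i$ provably entailing $\psi_i$; monotonicity of $\srb$ then gives $\srb(\psi_1,\ldots,\psi_n)\in w$. As this holds for every such selection, the tuple $(u_1,\ldots,u_n)$ satisfies the canonical accessibility relation $R^{\mathbf{K}\forall}_{\srb}wu_1\ldots u_n$, completing the proof.
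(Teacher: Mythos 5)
Your plan is correct, and at the crucial point it rests on exactly the same two ingredients as the paper's proof: Lemma~\ref{existsfree} applied inside the $\srb$-context, combined with the witnessedness of $w$ to discharge the resulting leading existential. Where you genuinely differ is the architecture. The paper works in two phases: it first builds the sets $u_i$ one coordinate at a time via dual-operator projections (first $u_1^-=\{\psi \mid \spt(\psi,\neg\phi_2)\in w\}$, then $u_2^-$ defined relative to $u_1$, and so on), proves each $u_i^-\cup\{\phi_i\}$ consistent, maximalizes by Lindenbaum, and checks the relation through the box-form characterization cited from \cite{noi}; only afterwards does it run an enumeration of existentials, maintaining the invariant $\dia{t}(\vp\wedge\omega(\epsilon_1,j_1)\wedge\cdots\wedge\omega(\epsilon_m,j_m))\in w$ to make the sets witnessed. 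Your one-pass simultaneous Henkin construction---deciding every formula by additivity of $\srb$, inserting witnesses on the fly, and reading off maximality, witnessedness and the relation from the single invariant $\srb(\gamma_1,\ldots,\gamma_n)\in w$---achieves the same end and is cleaner on one delicate point: the paper's second phase speaks of ``expanding'' the already maximal sets $u_t^0:=u_t$ by new witnessed conditionals, which cannot be read literally (a maximal consistent set has no proper consistent extension), so the witnessing steps must in fact be interleaved with maximalization, which is precisely what your construction does; in this sense your organization repairs a real sloppiness rather than merely restyling the argument. Two small points to make explicit in a full write-up: (i) the fresh variable $y$ in your application of Lemma~\ref{existsfree} must be chosen to avoid free occurrences not only in $\gamma_i$ and $\chi$ but also in the other arguments $\gamma_k$, i.e.\ fresh for the whole finite configuration (the paper tacitly uses the same convention); (ii) the condition you verify at the end is the diamond-form one ($\psi_i\in u_i$ for all $i$ implies $\srb(\psi_1,\ldots,\psi_n)\in w$), whereas the paper verifies the dual box-form condition ($\spt(\psi_1,\ldots,\psi_n)\in w$ implies some $\psi_i\in u_i$) taken from \cite{noi}; for maximal consistent sets the two are interderivable via $(Dual_\sigma)$, and one sentence saying so is needed to match the stated definition of $R^{\mathbf{K}\forall}_{\srb}$.
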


\begin{proof}
The proof for unary operators is similar with \cite[Lemma 4.20]{mod} for any sort $s\in S$. We prove this lemma for higher arity and start with $\srb$ a binary operator.

Suppose $\srb(\phi_1, \phi_2) \in w$, where $\phi_1 \in  Form_{s_1}$ and $\phi_2 \in Form_{s_2}$. We define $u_1^{-}:=\{\psi| \spt(\psi,\neg \phi_2)\in w\}$. We prove that $u_1^{-} \cup \{ \phi_1 \}$ is consistent. Let us suppose is not consistent. Then there are formulas of sort $ s_1$, $\psi_1,\ldots,$ $\psi_m \in u_1^{-}$ such that \mbox{$\vds{s_1} \psi_1 \wedge \ldots \wedge \psi_m \to \neg \phi_1$}. Easy modal reasoning yields \mbox{$\vds{s} \spt(\psi_1 \wedge \ldots \wedge \psi_m, \neg \phi_2) \to \spt(\neg \phi_1, \neg \phi_2)$}. But \mbox{$\vds{s} \spt(\psi_1, \neg \phi_2) \wedge \ldots \wedge \spt(\psi_m,$} $ \neg \phi_2) \to \spt(\psi_1 \wedge \ldots \wedge \psi_m, \neg \phi_2)$, so $\vds{s} \spt(\psi_1, \neg \phi_2) \wedge \ldots \wedge \spt(\psi_m, \neg \phi_2) \to \spt(\neg \phi_1, \neg \phi_2)$. We have $\spt(\psi_1, \neg \phi_2) \in w, \ldots,$ $ $ $\spt(\psi_m, \neg \phi_2) \in w$ and $w$ is a witnessed maximal consistent set, thus it follows that $\spt(\neg \phi_1, \neg \phi_2) \in w$. So, we get that $\neg \srb(\phi_1, \phi_2)\in w $, which is a contradiction, since $w$ is consistent. Therefore, $u_1^{-} \cup \{ \phi_1 \}$ is consistent and can be extended by Lindenbaum's Lemma to $u_1$ a maximal consistent set. By construction, $\phi_1 \in u_1$. 
We define $u_2^{-}:=\{\psi_2| \mbox{ exists } \psi_1 \in u_1 \mbox{ such that } \spt(\neg \psi_1, \psi_2)\in w\}$. We prove that $u_2^{-} \cup \{ \phi_2 \}$ is consistent. Let us suppose is not consistent. Then there exists formulas of sort $s_2$, $\psi_2^{1},\ldots,\psi_2^m \in u_2^{-}$ such that $\vds{s_2} \psi_2^{1} \wedge \ldots \wedge \psi_2^{m} \to \neg \phi_2$. Also, because $\psi_2^{1},\ldots,\psi_2^m \in u_2^{-}$,by definition of $u_2^{-}$,  we have that there exists formulas $\psi_1^{1},\ldots,\psi_1^m \in u_1$ such that $\spt(\neg \psi_1^{1},  \psi_2^{1}, \ldots, \spt(\neg \psi_1^{m}, \psi_2^{m} \in w$. Let $\psi := \neg \psi_1^{1}\vee \ldots \vee \neg \psi_1^{m}$. Therefore, we have $\spt(\psi, \psi_2^{1}), \ldots, \spt(\psi,  \psi_2^{m}) \in w$.

Easy modal reasoning applied on $\vds{s_2} \psi_2^{1} \wedge \ldots \wedge \psi_2^{m} \to \neg \phi_2$ yields that $\vds{s} \spt(\psi, \psi_2^{1}\wedge \ldots \wedge \psi_2^m) \to \spt(\psi, \neg \phi_2)$. But $\vds{s} \spt(\psi, \psi_2^{1}) \wedge \ldots \wedge \spt(\psi, \psi_2^m) \to \spt(\psi, \psi_2^1 \wedge \ldots \wedge \psi_2^m)$, therefore $\vds{s} \spt(\psi, \psi_2^{1}) \wedge \ldots \wedge \spt(\psi, \psi_2^m) \to \spt(\psi, \neg \phi_2)$. We have $\spt(\psi, \psi_2^{1}), \ldots, \spt(\psi,  \psi_2^{m}) \in w$ and $w$ is a witnessed maximal consistent set, thus it follows that $\spt(\phi, \neg \phi_2) \in w$. So, by definition of $u_1^{-}$, we get that $\psi \in u_1^{-} \subseteq u_1 $, which is equivalent with $\neg \psi_1^{1}\vee \ldots \vee \neg \psi_1^{m} \in u_1$. Hence, exists $k\in [m]$ such that $\neg \psi_1^{k}\in u_1$. But $\psi_1^{k}\in u_1$ and this contradicts the consistency of $u_1$. Therefore, $u_2^{-} \cup \{ \phi_2 \}$ is consistent and can be extended by Lindenbaum’s Lemma to $u_2$ a maximal consistent set. By construction, $\phi_2 \in u_2$.

Let us verify if $R^{\mathbf{K}\forall}_{\srb}wu_1u_2$. From \cite[Lemma 2.18]{noi} we need to verify that $\spt(\psi_1, \psi_2) \in w$ implies $\psi_1 \in u_1$ or $\psi_2 \in u_2$. Suppose $\spt(\psi_1, \psi_2) \in w$. We have two cases. If $\psi_1 \in u_1$, then we get $R^{\mathbf{K}\forall}_{\srb} wu_1u_2$. If $\psi_1 \not \in u_1$, then $\neg \psi_1 \in u_1$, so $\spt (\neg(\neg \psi_1)), \psi_2) \in w$. By definition of $u_2^{-}$, we can conclude that $\psi_2 \in u_2$.

In the same way we can prove the case for higher arity. Let us suppose than $w$ is a maximal consistent set and $\srb (\phi_1, \ldots, \phi_{n-1}) \in w$ then there exists  maximal consistent sets $u_i$ such that $R^{\mathbf{K}\forall}_{\srb} wu_1\ldots u_{n-1}$ and $\phi_i \in u_i$ for any $i\in [n-1]$ where $u_{n-1}^{-}:=\{\psi_{n-1}| \mbox{ for any } i\in [n-2] \mbox{  there exists } \psi_i \in u_i$ such that $ \spt(\neg \psi_1,\ldots, \neg\psi_{n-2}, \psi_{n-1})\in w \}$.

So, we proved that there exist maximal consistent sets $u_i$. Now we want to prove that we can expand those maximal consistent sets to witnessed maximal consistent sets.

Enumerate all the formulas of form $\exists x \chi$, where $x$ can be any state formula of any sort. For each formula in the enumeration we add a suitable witnessed conditional. In this way we inductively expand each $u_i$ for any $i \in [n]$ to a witnessed maximal consistent set. 
 
Suppose that $\srb:Form_{s_1} \times \cdots \times Form_{s_n} \ri Form_{s}$ and define $\dia{t}(\vp) :=\srb(\vp_1, \ldots, \vp_{t-1}, \vp, \vp_{t+1}, \ldots, \vp_n)$ where $\vp \in Form_{s_t}$. Now we enumerate all the formulas of form $\exists x \chi$ of  sort $s_t$ where $x$ can be any state variable of  any sort. The notation $\omega(\exists x \chi, i)$ stands for the witnessed conditional for $\exists x \chi$ in nominal $i$, in other words the formula $\exists x\chi \ri \chi[i/x]$. Also, we use the notation $u_{t}^0 := u_t$ for the maximal consistent set from which we start to expand it to the needed witnessed maximal consistent set. Suppose that for the firsts $m$ formulas in the enumeration we expanded $u_{t}^0$ to a witnessed maximal consistent set $u_{t}^m$. We shall prove that if $\epsilon_{m+1}$ is the $(m+1)$-formula in the enumeration then it is possible to choose a nominal $j_{m+1}$ such that the set $u_{t}^{m+1}= u_{t}^{m} \cup \{ \omega(\epsilon_{m+1}, j_{m+1}\} $ is consistent. Therefore, we will prove that it is possible to choose $j_{m+1}$ so that $\dia{t}(\vp \wedge\omega(\epsilon_1, j_1)\wedge \ldots \wedge\omega(\epsilon_m, j_m)\wedge \omega(\epsilon_{m+1}, j_{m+1})) \in w$.

As we suppose we have already construct $u_t^m$ a witnessed maximal consistent set which contains the witnessed conditionals $\omega(\epsilon_1, j_1), \ldots ,\omega(\epsilon_m, j_m)$ for the firsts $m$ formulas in the enumeration, such that $\dia{t}(\vp \wedge\omega(\epsilon_1, j_1)\wedge \ldots \wedge\omega(\epsilon_m, j_m)) \in w$.
Let $\phi := \vp \wedge\omega(\epsilon_1, j_1)\wedge \ldots \wedge\omega(\epsilon_m, j_m)$.

Suppose that $\epsilon_{m+1}$ is $\exists x \chi$. By Lemma \ref{existsfree} we have $\vds{s_t} \dia{t} (\phi) \ri \exists y \dia{t}((\exists x \chi \ri \chi[y/x])\wedge \phi)$ where $y$ does not have free occurrences in either $\phi$ or $\chi$. Because $\dia{t}(\phi)\in w$, then so is $ \exists y \dia{t}((\exists x \chi \ri \chi[y/x])\wedge \phi)\in w$. Since $w$ is a witnessed maximal consistent set, then there is a nominal $j_{m+1}$ such that $\dia{t}((\exists x \chi \ri \chi[j_{m+1}/x])\wedge \phi)\in w$.
 Therefore, we chose $\omega(\epsilon_{m+1}, j_{m+1})= \exists x \chi \ri \chi[j_{m+1}/x] $ to be the needed witnessed conditional and we define $u_{t}^{m+1}:=u_{t}^{m} \cup \{ \omega(\epsilon_{m+1}, j_{m+1})\} $.

By construction, we have $\dia{t}(\vp \wedge\omega(\epsilon_1, j_1)\wedge \ldots \wedge\omega(\epsilon_m, j_m)\wedge \omega(\epsilon_{m+1}, j_{m+1})) \in w$. But is $u_{t}^{m+1}$ consistent? Let us suppose that $u_{t}^{m+1}$ is not consistent. Then there is a conjunction $\tau$ in $u_{t}^-$ where  $u_{t}^- =\{\vp \ |\ \mbox{for any } i \in [n], i\neq t \mbox{ there exists } \vp_i \in u_i \mbox{ such that } \spt(\vp_1, \ldots, \vp_{t-1}, \vp, \vp_{t+1}, \ldots, \vp_n) \}$  such that $\vds{s_t} \tau \ri \neg (\vp \wedge\omega(\epsilon_1, j_1)\wedge \ldots \wedge\omega(\epsilon_m, j_m)\wedge \omega(\epsilon_{m+1}, j_{m+1}))$. By modal reasoning, we get $\vds{s} \Box_t(\tau)\ri \Box_t(\neg (\vp \wedge\omega(\epsilon_1, j_1)\wedge \ldots \wedge\omega(\epsilon_m, j_m)$ $\wedge \omega(\epsilon_{m+1}, j_{m+1}))$. From definition of $u_t^-$ we know that $\Box_t(\tau) \in w$, so either $ \Box_t(\neg (\vp \wedge\omega(\epsilon_1, j_1)\wedge \ldots \wedge\omega(\epsilon_m, j_m)\wedge \omega(\epsilon_{m+1}, j_{m+1}))\in w$, equivalent with $ \neg \dia{t}((\vp \wedge\omega(\epsilon_1, j_1)\wedge \ldots \wedge\omega(\epsilon_m, j_m)\wedge \omega(\epsilon_{m+1}, j_{m+1}))\in w$ and this contradicts the consistency of $w$. For any $m\geq 0$, $u_t^m$ is a witnessed consistent set, therefore $\bigcup_{m\geq 0} u_t^m$ is a witnessed consistent set and can be extended by Lindenbaum's Lemma to a maximal consistent set. In this way we get the needed witnessed maximal consistent sets for any sort.\end{proof}

\begin{lemma}[Truth Lemma]\label{truthlemma}
Let $\mathcal{M}$ be a completed model, $g$ a completed $\mathcal{M}$-assignment and $w$ an maximal consistent set. For any sort $s\in S$ and any formula $\phi$ of sort $s$, we have:
\begin{center}
$\phi \in w$ if and only if $\mathcal{M}, g, w \mos{s} \phi$
\end{center}

\end{lemma}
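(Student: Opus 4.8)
The plan is to prove the biconditional by a structural induction on $\phi$, following the standard canonical-model argument (cf.\ \cite{mod}) while letting the witnessedness machinery carry the hybrid connectives. Throughout, $\mathcal{M}=\mathcal{M}_\star^{wit}$ is the completed model and $g=g^{wit\star}$ the completed assignment, and $w$ ranges over maximal consistent sets (possibly the dummy $\star$). For the atomic cases I would simply unfold the definition of the completed model and assignment. For $p\in{\rm PROP}_s$ we have $\mathcal{M},g,w\mos{s}p$ iff $w\in V^{wit\star}_s(p)=V^{wit}_s(p)=\{\Gamma\mid p\in\Gamma\}$, i.e.\ iff $p\in w$. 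For a nominal $j$ and a state variable $x$ the same computation applies, but here one must check both clauses of the completed-model definition: when the relevant naming set is nonempty its denotation is $\{\Gamma\mid j\in\Gamma\}$ (resp.\ $\{\Gamma\mid x\in\Gamma\}$), giving the equivalence directly, and when it is empty the denotation is $\{\star\}$, in which case $j\notin w$ (resp.\ $x\notin w$) for every genuine maximal consistent set $w$ and both sides fail.

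The Boolean cases $\neg\phi$ and $\phi\vee\psi$ are routine, following from the maximal consistency of $w$ together with the induction hypothesis. For the modal case $\sigma(\phi_1,\ldots,\phi_n)$ I would argue in two directions. Left to right, I invoke the Existence Lemma for Witnessed Models (Lemma~\ref{lem:existwit}): from $\sigma(\phi_1,\ldots,\phi_n)\in w$ it produces witnessed maximal consistent sets $u_1,\ldots,u_n$ with $R^{\mathbf{K}\forall}_\sigma\,wu_1\cdots u_n$ and $\phi_i\in u_i$; the induction hypothesis gives $\mathcal{M},g,u_i\mos{s_i}\phi_i$, and the satisfaction clause for $\sigma$ yields $\mathcal{M},g,w\mos{s}\sigma(\phi_1,\ldots,\phi_n)$. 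Right to left, satisfaction supplies successors $u_i$ with $R^{\mathbf{K}\forall}_\sigma\,wu_1\cdots u_n$ and $\mathcal{M},g,u_i\mos{s_i}\phi_i$, whence $\phi_i\in u_i$ by the induction hypothesis; using the defining property of $R^{\mathbf{K}\forall}_\sigma$ in its box-form from \cite[Lemma 2.18]{noi} together with the duality axiom $(Dual_\sigma)$, a membership $\sigma(\phi_1,\ldots,\phi_n)\notin w$ would force some $\neg\phi_i\in u_i$, contradicting the consistency of $u_i$; hence $\sigma(\phi_1,\ldots,\phi_n)\in w$.

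The case $\forall x\,\phi$ with $x\in{\rm SVAR}_t$ is where witnessedness does the real work, and it is the step I expect to be the main obstacle. The decisive observation is that, by the $(Name)$ axiom $\exists x\,x$ combined with witnessedness, every witnessed maximal consistent set contains a nominal naming it; consequently the value $g'_t(x)$ of any $x$-variant $g'$ equals $V_t(j)$ for some nominal $j\in{\rm NOM}_t$. For the forward direction, assume $\forall x\,\phi\in w$ and fix an arbitrary $x$-variant $g'$ with $g'_t(x)=V_t(j)$; the instance $\forall x\,\phi\to\phi[j/x]$ of $(Q2)$ (legitimate since nominals, being unbindable, are always substitutable) gives $\phi[j/x]\in w$, the induction hypothesis gives $\mathcal{M},g,w\mos{s}\phi[j/x]$, and the nominal clause of the Substitution Lemma (Lemma~\ref{lem:subst}) transfers this to $\mathcal{M},g',w\mos{s}\phi$; as $g'$ was arbitrary, $\mathcal{M},g,w\mos{s}\forall x\,\phi$. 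For the converse I argue contrapositively: if $\forall x\,\phi\notin w$ then $\exists x\,\neg\phi\in w$, and witnessedness supplies a nominal $j$ with $\neg\phi[j/x]\in w$, so $\phi[j/x]\notin w$; the induction hypothesis and the Substitution Lemma then yield an $x$-variant $g'$ with $\mathcal{M},g',w\mosn{s}\phi$, witnessing $\mathcal{M},g,w\mosn{s}\forall x\,\phi$.

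The hard part is precisely this quantifier step: one must reduce arbitrary $x$-variants to nominal substitution instances, which is exactly what the $(Name)$ axiom together with witnessedness guarantees, and one must keep the bookkeeping of the completed model (the dummy $\star$) coherent so that the Substitution Lemma applies uniformly. The one remaining technical subtlety is ensuring that the chosen naming nominal is substitutable for $x$ in $\phi$, which can always be arranged by the usual renaming of bound state variables.
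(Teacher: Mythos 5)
Your overall architecture matches the paper's: structural induction, with the Existence Lemma (Lemma~\ref{lem:existwit}) driving the modal case and witnessedness plus the Substitution Lemma driving the binder case (the paper phrases the binder case for $\exists x\,\psi$ rather than $\forall x\,\phi$, but that is only a dualization). However, your forward direction for $\forall x\,\phi$ rests on a false claim: that the value $g'_t(x)$ of an \emph{arbitrary} $x$-variant $g'$ is always of the form $V_t(j)$ for some nominal $j$. What the $(Name)$ axiom plus witnessedness actually gives is that every \emph{witnessed maximal consistent set} contains (hence is named by) a nominal; but the completed model may also contain the dummy state $\star$, which is added precisely when some state variable of that sort occurs in no witnessed maximal consistent set. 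The state $\star$ is the valuation of a nominal only in the accidental situation where some nominal also occurs in no maximal consistent set, and nothing guarantees this. So an $x$-variant with $g'_t(x)=\star$ escapes your reduction to nominal substitution instances, and the forward direction is simply not established for it. This is not a removable convenience: $\star$ is exactly why the paper's proof of the corresponding (dual) direction splits into two cases --- Case 1, where the relevant state is $V(j)$ for a nominal $j$, argued as you do via $(Q2)$ and the nominal clause of the Substitution Lemma; and Case 2, where the state is only of the form $g(y)$ for a state variable $y$, argued via the $(Q2)$ instance for $y$ after renaming the bound occurrences of $y$ (to secure substitutability), clause 1 of Lemma~\ref{lem:subst}, and the provable equivalence of the renamed formula with the original.

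Your contrapositive direction for $\forall$ is fine as written --- it needs only a single variant, and witnessedness supplies a nominal-named one; it is precisely the paper's ``$\Rightarrow$'' direction for $\exists$. To repair the forward direction, add the missing case: if $g'_t(x)=\star$, take a state variable $y\in{\rm SVAR}_t$ with $g_t(y)=\star$ (one exists by the construction of the completed assignment), rename the bound occurrences of $y$ in $\phi$ to obtain a provably equivalent $\phi'$ in which $y$ is substitutable for $x$, and run your argument with $\forall x\,\phi'\to\phi'[y/x]$ and clause 1 of Lemma~\ref{lem:subst}, transferring back through the equivalence $\forall x\,\phi\leftrightarrow\forall x\,\phi'$. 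One further small point: in the right-to-left step of the modal case you apply the induction hypothesis to the successors $u_i$ to conclude $\phi_i\in u_i$; this needs the observation, made explicitly in the paper, that no maximal consistent set has $\star$ among its successors, so each $u_i$ really is a maximal consistent set to which the induction hypothesis applies.
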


\begin{proof}
We make the proof by structural induction on $\phi$.
\begin{itemize}
\item  $\mathcal{M},g,w \mos{s} a$,where $a\in {\rm PROP}_s\cup {\rm NOM}_s$, iff $w\in V_s(a)$ iff $a\in w$;
\item $\mathcal{M},g,w \mos{s} x$, where $x\in {\rm SVAR}_s$, iff $w=g_s(x)$, iff $ x\in w$;
\item $\mathcal{M},g,w \mos{s} \neg \phi$ iff $\mathcal{M},g,w \not\mos{s}\phi$ iff $\phi \not\in w$ (inductive hypothesis) iff $\neg \phi \in w $ (maximal consistent set);
\item $\mathcal{M},g,w \mos{s} \phi \vee \psi$ iff $\mathcal{M},g,w \mos{s} \phi$ or $\mathcal{M},g,w \mos{s} \psi$  iff $\phi \in w$ or $\psi \in w$ (inductive hypothesis) iff $\phi \vee \psi \in w$;
\item let $\tr \in \Sigma_{s_1 \ldots s_n,s}$ and $\phi=\tr (\phi_1, \ldots , \phi_n )$;
\begin{itemize}
\item[``$\Leftarrow$"] $\mathcal{M},g, w \mos{s} \srb (\phi_1, \ldots , \phi_n )$, if and only if for any $i \in [n]$ there exists $u_i \in W_{s_i}$ such that $\mathcal{M},g,u_i  \mos{s_i} \phi_i$ and  ${R}^{\logl}_{\srb} ww_1\ldots w_n$. Using the induction hypothesis we get $\phi_i\in w_i$ for any  $i \in [n]$. Because no maximal consistent set precedes $\star$, we can conclude that neither $u_i$ is $\star$. Therefore, the successors of $w$ must be themselves maximal consistent sets which satisfy the correspondent $\phi_i$. In the end, by applying the induction hypothesis we get $\phi_i \in u_i$ for any $i \in [n]$. Since ${R}^{\mathbf{K}\forall}_{\srb} wu_1\ldots u_{n}$ by definition we infer that $\phi \in w$.
\item[``$\Rightarrow$"]Suppose $\srb (\phi_1, \ldots , \phi_n ) \in w$. Using Existence Lemma \ref{lem:existwit}, for any $i \in [n]$ there  are $u_i$ witnessed maximal consistent sets such that $\phi_i \in u_i$ and $R wu_1\ldots u_n$.  Using the induction hypothesis we get  $\mathcal{M},g, u_i \mos{s_i} \phi_i$ for any $i \in [n]$, so $\mathcal{M},g, w\mos{s} \phi$.
\end{itemize} 
\item let $\phi=\exists x \psi$
\begin{itemize}
\item[``$\Leftarrow$"] Suppose $\mathcal{M}, g, w \mos{s} \exists x \psi$. Then there exists $s\in \mathcal{M} $ such that $\mathcal{M},g',w \mos{s} \psi)$ where $g'\stackrel{x}{\sim} g$ and $g'(x)=\{ s\}$. Because of the definition of the completed models, we know that either a nominal $j$ or a state variable $y$ is true at a state $s$ with respect to the $\mathcal{M}$-assignment function $g$, even if $s =\star$.
\begin{itemize}
\item[[Case 1]] Suppose $V(i)=\{ s\}$. By Substitution Lemma \ref{lem:subst}, $\mathcal{M}, g, w \mos{s}\psi[j/x]$ and by inductive hypothesis $\psi[j/x] \in w$. By means of contrapositive of axiom $(Q2)$ it follows $\phi \in w$.
\item[[Case 2]] Suppose $g(y)=\{s\}$. Firstly, $y$ may not be substitutable for $x$ in $\psi$, therefore we need to replace all the bounded occurrences of $y$ in $\psi$ by some state variable that does not occur in $\psi$ at all. In this way, we get a new formula which we will name it $\psi'$. By Lemma \ref{lem:subst} it follows that $\psi \leftrightarrow \psi'$ is provable and by soundness we get  that it is valid. Now, we have that $\mathcal{M}, g',w \mos{s} \psi'$ and since $y$ is now substitutable for x in $\psi'$, by clause 1 of Substitution Lemma \ref{hunu} it follows $\mathcal{M}, g',w \mos{s} \psi'[y/x]$. By inductive hypothesis $\psi'[y/x] \in w$ and by applying the contrapositive of the $(Q2)$ axiom, it follows $\exists x \psi' \in w$. But $\exists x \psi \leftrightarrow \exists x \psi'$ is provable, therefore $\exists x \psi \in w$.
\end{itemize}
\item[``$\Rightarrow$"]Suppose $\exists x \psi \in w$. As $w$ is a witnessed maximal consistent sets then there is a nominal $j$ of sort $s$ such that $\psi[j/x] \in w$. By the induction hypothesis $\mathcal{M}, g, w \mos{s} \psi[j/x]$ and by means of contrapositive of axiom $(Q2)$ it follows $\mathcal{M}, g, w \mos{s} \exists x \psi$.\end{itemize}\end{itemize}\end{proof}

\begin{theorem}[Hybrid Completeness]
Every consistent set of formulas is satisfiable.
\end{theorem}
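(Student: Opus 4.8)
The plan is to run the standard canonical-model argument, now assembled from the lemmas established above. Fix a consistent set $\Gamma \subseteq Form_s$ of formulas of some sort $s$ (recall that consistency and satisfaction are indexed per sort, so a consistent set lives in a single sort). First I would invoke the Extended Lindenbaum Lemma (Lemma~\ref{extendLinden}): passing to the language $\mathbf{K}\forall^+$ obtained by adjoining countably many fresh nominals, I extend $\Gamma$ to a \emph{witnessed} maximal $\mathbf{K}\forall^+$-consistent set $\Gamma^+ \in W^{\mathbf{K}\forall}_s$. The reason for working in $\mathbf{K}\forall^+$ is exactly that witnessing supplies, for every existential $\exists x\,\psi \in \Gamma^+$, a nominal naming a witness; this is what later makes the $\exists$-clause of the Truth Lemma go through.

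Next I would build the semantic object. Take the witnessed canonical model and the submodel $\mathcal{M}_{\mathbf{K}\forall}^{wit}$ generated by $\Gamma^+$; by the (unlabelled) lemma preceding the completion, each state symbol lies in at most one witnessed maximal consistent set of its sort, so nominals and state variables already denote \emph{at most} one point. I then pass to the completed model $\mathcal{M}_\star^{wit}$, adjoining the dummy symbol $\star$ precisely on those sorts where some state symbol would otherwise be interpreted by the empty set. After this step every nominal denotes exactly one world and the canonical assignment $g$ is total and single-valued, so $\mathcal{M}_\star^{wit}$ is a genuine standard model and $g$ a standard assignment of the kind the Truth Lemma requires. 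The Existence Lemma (Lemma~\ref{lem:existwit}) supplies enough canonical successors, which is what the $\sigma$-clause of the Truth Lemma relies on.

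The conclusion is then immediate from the Truth Lemma (Lemma~\ref{truthlemma}): since $\Gamma \subseteq \Gamma^+$ and $\Gamma^+$ is a point of $\mathcal{M}_\star^{wit}$, every $\phi \in \Gamma$ satisfies $\phi \in \Gamma^+$, hence $\mathcal{M}_\star^{wit}, g, \Gamma^+ \mos{s} \phi$. Thus $\Gamma$ is satisfied at the world $\Gamma^+$ under $g$. A final bookkeeping remark closes the argument: the model produced lives in the enriched signature $\mathbf{K}\forall^+$, but the formulas of $\Gamma$ mention none of the new nominals, so restricting the valuation to the original symbols (a reduct that leaves the satisfaction of $\Gamma$ untouched) yields satisfiability in $\mathbf{K}\forall$ itself.

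The genuinely delicate point is not any single calculation but the interface between the lemmas: guaranteeing that the object handed to the Truth Lemma really meets its hypotheses. Concretely, the main obstacle is verifying that the completion step produces a \emph{standard} model --- that after adding $\star$ every nominal and every free state variable is interpreted by a singleton, neither the empty set nor two distinct points --- and that $g$ together with the witnessing nominals introduced by Lemma~\ref{extendLinden} remains coherent once relations and valuation are corestricted to the generated, then completed, submodel. Once these standardness invariants are in place, the theorem is just the composition of Lemmas~\ref{extendLinden}, \ref{lem:existwit} and~\ref{truthlemma}.
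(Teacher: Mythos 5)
Your proof is correct and is exactly the argument the paper intends: the paper itself only says ``the proof is similar to the one in \cite{hyb}'', and what you have written is precisely that standard canonical-model argument assembled from the paper's own lemmas (Extended Lindenbaum, the generated witnessed submodel, completion with $\star$, Existence Lemma, Truth Lemma). Your closing remarks on the reduct to the original language and on checking the standardness hypotheses of the Truth Lemma are exactly the bookkeeping the paper leaves implicit.
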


\begin{proof}
The proof is similar to the one in \cite{hyb}.\end{proof}

\section{The many-sorted hybrid  modal logic ${\mathcal H}_{ \Sigma}(@_z,\forall)$}\label{sec2}

        Let $(S,\Sigma)$ be a many-sorted signature. As already announced,  in this section we extend the sistem defined in Section~\ref{sec1} by adding the satisfaction operators $@_z^s$ where $s\in S$ and $z$ is a {\em state symbol}, that is, a nominal or a state variable. The formulas 
of  ${\mathcal H}_{ \Sigma}(@_z,\forall)$ are defined as follows:

\vspace*{-0.2cm}
\begin{center}
$\phi_s :=  p\mid j\mid y_s\mid \neg \phi_s \mid\phi_s \vee \phi_s \mid \sigma(\phi_{s_1}, \ldots, \phi_{s_n})_s
  \mid  \forall x_t\, \phi_s\mid @_z^s\psi_t$
\end{center}
\vspace*{-0.2cm}
Here, $p\in {\rm PROP}_s$, $j\in {\rm NOM}_s$, $t\in S$, $x\in {\rm SVAR}_t$, $y\in {\rm SVAR}_s$, $\sigma\in \Sigma_{s_1\cdots s_n,s}$, $z$ is a state symbol of sort $t$ and 
$\psi$ is a formula of sort $t$.

The satisfaction relation is defined similar with the one in ${\mathcal H}_{ \Sigma}(\forall)$, but we only need to add the definition for $@_z$: $\mathcal{M},g,w \mos{s} @_z^s\phi$ if and only if $\mathcal{M},g,Den_g(z)\mos{t} \phi$ where $z$ is a state symbol of sort $t$ and $\phi$ is a formula of the same sort $t$. Here, $Den_g(z)$ is the denotation of the state symbol $z$ of sort $s$ in a model $\mathcal{M}$ with an assignment function $g$, where $Den_g(z)=V_s(z)$ if $z$ is a nominal, and $Den_g(z)=g_s(z)$ if $z$ is a state variable. 

Let us remark that if $z$ is a nominal, then the satisfaction relation is equivalent with the one in \cite{noi2}: $\mathcal{M},g,w \mos{s} @_z^s\phi$ if and only if $\mathcal{M},g,Den_g(z)\mos{t} \phi$ if and only if $\mathcal{M}, g, v \mos{t} \phi$ where $Den_g(z)=V_t(z)=\{v\}$. 

\begin{figure}[h]
\centering

{\bf The system} ${\mathcal H}_{\Sigma}(@_z,\forall)$
\begin{itemize}
\item The axioms and the deduction rules of ${\mathcal K}_{\Sigma}$

\item Axiom schemes: any formula of the following form is an axiom, where $s,s',t$ are sorts,  $\sigma\in\Sigma_{s_1\cdots s_n,s}$,  $\phi,\psi, \phi_1,\ldots,\phi_n$ are formulas (when necessary, their sort is marked as a subscript), $x$ is state variable and $y$, $z$ are state symbols:

$\begin{array}{rlrl}
(K@) & @_z^{s} (\phi_t \to \psi_t) \to (@_z^s \phi \to @_z^s \psi) &
(Agree) &  @_y^{t}@_z^{t'} \phi_s \leftrightarrow @^t_z \phi_s\\
(SelfDual) & @^s_z \phi_t \leftrightarrow \neg @_z^s \neg \phi_t &
(Intro)  & z \to (\phi_s \leftrightarrow @_z^s \phi_s)\\
(Back) & \sigma(\ldots,\phi_{i-1}, @_z^{s_i} {\psi}_t,\phi_{i+1},\ldots)_s\to @_z^s {\psi}_t & 
(Ref) & @_z^sz_t 
\end{array}$\\
\medskip

$\begin{array}{rl}
(Q1) & \forall x\,(\phi \to \psi) \to (\phi \to \forall x\, \psi)
  \mbox{ where $\phi$ contains no free occurrences of x}\\
(Q2) & \forall x\,\phi \to \phi[y\slash x] 
  \mbox{ where $y$ is substitutable for $x$ in $\phi$}\\
(Name) & \exists x \, x \\ 
(Barcan) & \forall x \,\sigma^{\mb}(\phi_1, \ldots, \phi_n) \to \sigma^{\mb}(\phi_1, \ldots,\forall x\phi_{i},\ldots, \phi_n)\\
(Barcan@) & \forall x \, @_z\phi \to @_z \forall x\,\phi, \mbox{where } x \neq z\\
 (Nom\, x) & @_z x\wedge @_y x \to @_z y 
\end{array}$

\medskip

\item Deduction rules:

\begin{tabular}{rl}
$(BroadcastS)$ & if $\vds{s}@_z^s\phi_t$ then $\vds{s'}@_z^{s'}\phi_t$ \\
 $(Gen@)$& if $\vds{s'} \phi$ then $\vds{s} @_z \phi$, where $z $ and $\phi$ have the same sort $s'$\\
$(Paste0)$ & if $\vds{s} @^s_z (y \wedge \phi) \to \psi$ then $\vds{s} @_z \phi \to \psi$\\ & where $z$ is distinct from $y$ that does not occur in $\phi$ or $\psi$\\
$(Paste1)$ & if $\vds{s} @^s_z \sigma(\ldots, y \wedge \phi,\ldots)  \to \psi$ then $\vds{s} @^s_z \sigma(\ldots, \phi, \ldots) \to \psi$\\ & where $z$ is distinct from $y$ that does not occur in $\phi$ or $\psi$\\
$(Gen)$ & if $\vds{s}\phi$ then $\vds{s}\forall x\phi$, where $\phi\in Form_s$ and $x\in {\rm SVAR}_t$ for some $t\in S$.
\end{tabular}
\end{itemize}
\caption{$(S,\Sigma)$ hybrid logic}\label{fig:unu}
\end{figure}

\textit{Note}: Due to the similarities between ${\mathcal H}_{ \Sigma}(@,\forall)$ and ${\mathcal H}_{ \Sigma}(@_z,\forall)$, the following section will contain only the most distinctive proofs.  

\begin{proposition}[Soundness] The deductive systems for ${\mathcal H}_{ \Sigma}(@_z,\forall)$ from Figure \ref{fig:unu} is sound.
\end{proposition}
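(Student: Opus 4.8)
The plan is to check the two obligations of a Hilbert-style soundness proof: that each axiom scheme is valid (satisfied at every world, under every assignment, in every model) and that each deduction rule preserves validity; soundness then follows by induction on derivations. The axioms and rules imported from $\mathcal{K}_\Sigma$ are sound by \cite{noi}, and the quantifier block $(Q1)$, $(Q2)$, $(Name)$, $(Barcan)$, $(Gen)$ is treated exactly as in Section~\ref{sec1}. I would therefore concentrate on the items genuinely new to $\mathcal{H}_\Sigma(@_z,\forall)$, namely the satisfaction-operator axioms and the paste rules. The one fact driving nearly every case is the clause $\mathcal{M},g,w \mos{s} @_z^s\phi$ iff $\mathcal{M},g,Den_g(z) \mos{t} \phi$: the truth of an $@_z$-formula is read off at the single fixed state $Den_g(z)$, independently of the evaluation world $w$ and of the outer sort $s$.

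First I would dispatch the local $@$-axioms by unfolding this clause. $(K@)$ is immediate, since $\phi \to \psi$ and $\phi$ holding at $Den_g(z)$ force $\psi$ there. $(SelfDual)$ relies on $Den_g(z)$ being a single point, so $\neg\phi$ holds there exactly when $\phi$ fails. $(Agree)$ holds because the inner jump to $Den_g(z)$ overrides the outer jump to $Den_g(y)$. $(Ref)$ expresses that $z$ is true at $Den_g(z)$, which is the definition of $Den_g$ in both the nominal and the state-variable case. $(Intro)$ uses that $\mathcal{M},g,w \mos{s} z$ means precisely $w = Den_g(z)$, so evaluating $\phi$ at $w$ and at $Den_g(z)$ coincides. $(Nom\,x)$ uses that a state symbol names a unique point: $@_z x$ and $@_y x$ force $Den_g(z) = g_t(x) = Den_g(y)$, whence $@_z y$. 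Finally, $(Back)$ and the rule $(BroadcastS)$ both exploit world- and sort-independence: if $@_z\psi$ holds at one $\sigma$-successor it holds at $w$ as well, and a valid $@_z^s\phi$ stays valid when re-read at another sort $s'$.

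For the quantifier-$@$ interaction, $(Barcan@)$ needs its side condition $x \neq z$: every $x$-variant $g'$ of $g$ then satisfies $Den_{g'}(z) = Den_g(z)$, so both $\forall x\,@_z\phi$ and $@_z\forall x\,\phi$ reduce to ``$\phi$ holds at $Den_g(z)$ under every $x$-variant.'' The rule $(Gen@)$ preserves validity because a valid $\phi$ of sort $s'$ holds at every sort-$s'$ world, in particular at $Den_g(z)$, across all models and assignments.

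The rules whose soundness is not a one-line unfolding are $(Paste0)$ and $(Paste1)$, and I expect these to be the crux. For $(Paste0)$, let $t$ be the common sort of $z$, $y$ and $\phi$, and assume $@_z^s(y \wedge \phi) \to \psi$ is valid. Given a model $\mathcal{M}$, an assignment $g$ and a world $w$ with $\mathcal{M},g,w \mos{s} @_z\phi$, we have $\mathcal{M},g,Den_g(z) \mos{t} \phi$, and the idea is to \emph{name} that point: let $g'$ be the $y$-variant of $g$ with $g'_t(y) = Den_g(z)$. Since $y \neq z$ we have $Den_{g'}(z) = Den_g(z)$, and since $y$ does not occur in $\phi$ the Agreement Lemma (Lemma~\ref{lem:agree}) gives $\mathcal{M},g',Den_g(z) \mos{t} (y \wedge \phi)$, hence $\mathcal{M},g',w \mos{s} @_z(y \wedge \phi)$. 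Validity of the premise yields $\mathcal{M},g',w \mos{s} \psi$, and as $y$ does not occur in $\psi$ the Agreement Lemma transfers this back to $g$, giving $\mathcal{M},g,w \mos{s} \psi$. The rule $(Paste1)$ is the same argument one modality deeper: from $@_z\sigma(\ldots,\phi,\ldots)$ at $w$ one extracts a $\sigma$-successor $u_i$ of $Den_g(z)$ witnessing $\phi$, names it by setting $g'_{s_i}(y) = u_i$, and again applies Lemma~\ref{lem:agree} first to reinsert $y \wedge \phi$ and then to discharge $\psi$. The delicate points throughout are exactly the freshness and distinctness conditions --- $y \neq z$ and $y$ absent from the displayed formulas --- which guarantee that reassigning $y$ neither moves $Den_g(z)$ nor alters the truth values of $\phi$ and $\psi$.
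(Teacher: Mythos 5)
Your proof is correct, and it follows the same overall decomposition the paper leaves implicit (import $\mathcal{K}_\Sigma$ and the quantifier block, unfold the satisfaction clause for $@_z$ on the new axioms), but on the one point where the paper actually writes out an argument --- the soundness of $(Paste0)$ and $(Paste1)$, stated just after Lemma~\ref{lem:prop} --- your route is genuinely different, and it is the right one. The paper fixes a named model and argues pointwise: it tries to show that if $@_z^s(y\wedge\phi)\to\psi$ is true at a given $(g,w)$, then $@_z^s\phi\to\psi$ is already true at that same $(g,w)$. No such local implication holds (take $\phi=\top$, $\psi=\bot$, and $y$ a state variable with $g(y)\neq Den_g(z)$: the premise instance holds vacuously at $(g,w)$ while the conclusion fails), and indeed the freshness of $y$ is never used there; the paper's chain of equivalences only goes through because $\neg(y\wedge\phi)$ is read as ``$y$ fails and $\phi$ fails'' instead of ``$y$ fails or $\phi$ fails''. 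Your argument instead treats the paste rules as what they are, validity-preserving but not truth-preserving rules: you name the relevant point ($Den_g(z)$, respectively the chosen $\sigma$-successor) with a $y$-variant assignment $g'$, use the side conditions $y\neq z$ and $y$ not occurring in $\phi$ or $\psi$ together with the Agreement Lemma~\ref{lem:agree} to transfer truth between $g$ and $g'$, and only then invoke the validity (over all assignments) of the premise. This is the standard argument of \cite{pureax,hyb}, it is exactly what the Extended Lindenbaum Lemma~\ref{lemma:lind} needs the rules for, and it supplies what the paper's own verification misses. One small completion you should add: in Figure~\ref{fig:unu} the fresh symbol $y$ ranges over state symbols, so besides the case where $y$ is a state variable you should also treat $y$ a nominal; the same naming argument works, except that one modifies the valuation by setting $V(y):=\{Den_g(z)\}$ (truth of formulas not containing $y$ is again unaffected) rather than passing to a variant assignment.
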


\begin{lemma}\label{lem:prop}
Let $\Gamma_s$ be a maximal consistent set that contains a state symbol of sort $s$, and for all state symbols $z$, let $\Delta_z=\{ \phi \mid @_z^s \phi \in \Gamma_s$. Then:
\begin{itemize}
\item[1)] For every state symbol $z$ of sort $s$, $\Delta_z$ is a maximal consistent set that contains $z$.
\item[2)] For all state symbols $z$ and $y$ of same sort, $@^s_z \phi \in \Delta_y$ iff $@^s_z \phi \in \Gamma_s$.
\item[3)] There is a state symbol $z$ such that $\Gamma_s = \Delta_z$.
\item[4)] For all state symbols $z$ and $y$ of same sort, if $z \in \Delta_y$ then $\Delta_z= \Delta_y$.
\end{itemize}
\end{lemma}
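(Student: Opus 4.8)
The plan is to establish the four items in order, exploiting the fact that item 4) can be bootstrapped from items 1)--3). The workhorses throughout are the axioms $(K@)$, $(SelfDual)$, $(Intro)$, $(Agree)$, $(Ref)$ together with the rule $(Gen@)$, which together make $@_z$ behave like a normal, self-dual modality that commutes with the Boolean connectives. Concretely, I would first record three derived principles for $@_z$: from $(Gen@)$ and $(K@)$, that $\vdash \phi \to \psi$ entails $\vdash @_z\phi \to @_z\psi$ and that $@_z$ distributes over finite conjunctions; and from $(SelfDual)$, that $@_z\neg\phi \leftrightarrow \neg @_z\phi$ is a theorem. Item 1) then splits into three checks. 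Membership $z \in \Delta_z$ is immediate from $(Ref)$, since $@_z^s z$ is an axiom and hence lies in $\Gamma_s$. For consistency, any finite $\phi_1,\dots,\phi_n \in \Delta_z$ with $\vdash \bigwedge_i \phi_i \to \bot$ would, by the derived monotonicity and conjunction rules, force $@_z\bot \in \Gamma_s$, contradicting the theoremhood of $\neg @_z\bot$ (a consequence of $(SelfDual)$ and $\vdash @_z\top$). Maximality is read off directly from $(SelfDual)$: for each $\phi$ of sort $s$ exactly one of $@_z\phi,\,@_z\neg\phi$ lies in the maximal set $\Gamma_s$, so exactly one of $\phi,\neg\phi$ lies in $\Delta_z$.

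For item 2) I would appeal to $(Agree)$ alone. By definition $@_z^s\phi \in \Delta_y$ iff $@_y(@_z^s\phi) \in \Gamma_s$, and $(Agree)$ makes $@_y @_z^s\phi \leftrightarrow @_z^s\phi$ a theorem, so maximality of $\Gamma_s$ yields the claimed equivalence with $@_z^s\phi \in \Gamma_s$. For item 3), the hypothesis supplies a state symbol $z$ of sort $s$ with $z \in \Gamma_s$. Instantiating $(Intro)$ to $z \to (\phi \leftrightarrow @_z^s\phi)$ and using $z \in \Gamma_s$ with maximality gives $\phi \leftrightarrow @_z^s\phi \in \Gamma_s$; hence $\phi \in \Gamma_s$ iff $@_z^s\phi \in \Gamma_s$ iff $\phi \in \Delta_z$, that is, $\Gamma_s = \Delta_z$.

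For item 4) the key observation is that the argument of 3) used nothing about $\Gamma_s$ beyond its being a maximal consistent set that contains the relevant state symbol. By 1), $\Delta_y$ is itself a maximal consistent set, and by hypothesis $z \in \Delta_y$, so the $(Intro)$ argument applies verbatim inside $\Delta_y$ to give $\phi \in \Delta_y$ iff $@_z^s\phi \in \Delta_y$; item 2) then rewrites $@_z^s\phi \in \Delta_y$ as $@_z^s\phi \in \Gamma_s$, i.e. $\phi \in \Delta_z$, whence $\Delta_y = \Delta_z$. I expect the main obstacle to be item 1), namely the careful derivation of the normal-modal behaviour of $@_z$ (monotonicity, distribution over conjunction, and the theoremhood of $\neg @_z\bot$) from the axioms and $(Gen@)$; once these derived rules are in hand, items 2)--4) reduce to short combinatorial arguments resting on $(Agree)$, $(Intro)$, and the reuse of 3) inside $\Delta_y$.
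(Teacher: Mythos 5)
Your proof is correct and is essentially the approach the paper relies on: the paper gives no argument of its own, deferring to the mono-sorted proofs in \cite{hybtemp}, and your derivation is precisely that standard argument transported to the many-sorted setting. Specifically, your use of $(Ref)$ for $z\in\Delta_z$, of normality of $@_z$ (via $(Gen@)$ and $(K@)$) together with $(SelfDual)$ for consistency and maximality in item 1), of $(Agree)$ for 2), of $(Intro)$ for 3), and the reuse of 1)--3) inside the maximal consistent set $\Delta_y$ for 4) matches the cited proof step for step.
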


\begin{proof}
The proofs are similar to the ones in \cite{hybtemp}. 
\end{proof}

This Lemma gives us the maximal consistent sets needed in the Existence Lemma. We build our models out of named sets, i.e. sets containing nominals, and also these are automatically witnessed, therefore, we don't need to glue a dummy state symbol as we deed in the first system, $\mathcal{H}_{\Sigma}(\forall)$, presented in this paper.
But more is needed in order for our model to support an Existential Lemma. Therefore, we add the $Paste$ rules, as you can see in Figure \ref{fig:unu}. In this setting, the system is still sound as we prove in the following:

Now, let $\mathcal{M}$ be an arbitrary named model.

%
$(Paste0)$ Suppose $\mathcal{M},g,w \mos{s} @_z^s(y \wedge \phi) \to \psi$ iff $\mathcal{M},g,w \mos{s} @_z^s(y \wedge \phi)$ implies $\mathcal{M},g,w \mos{s} \psi$. Hence, ($\mathcal{M},g,v\mos{s'} y \wedge \phi$ where $Den_g(z)=\{v \}$ implies  $\mathcal{M},g,w \mos{s} \psi$) iff ($\mathcal{M},g,v \not\!\!\mos{s'} y$ and $\mathcal{M},g,v\not\!\!\mos{s'} \phi$, where $Den_g(z)=\{v \}$, or  $\mathcal{M},g,w \mos{s} \psi$). It follows that ($\mathcal{M},g,v \not\!\!\mos{s'} y$ or  $\mathcal{M},g,w \mos{s} \psi$) and ($\mathcal{M},g,v\not\!\!\mos{s'} \phi$ or  $\mathcal{M},g,w \mos{s} \psi$), where $Den_g(z)=\{v \}$. Then, ($\mathcal{M},g,v\not\!\!\mos{s'} \phi$ or  $\mathcal{M},g,w \mos{s} \psi$), where $Den_g(z)=\{v \}$. So, $\mathcal{M}, g, w \mos{s} @^s_z \phi \to \psi$.

$(Paste1)$ Suppose $\mathcal{M},g,w \mos{s} @_z^s  \sigma(\psi_1, \ldots, \psi_{i-1}, y \wedge \phi, \psi_{i+1}, \ldots, \psi_n) \to \psi$ iff $\mathcal{M},g,w \mos{s} @_j^s  \sigma(\psi_1, \ldots,$ $\psi_{i-1},  y \wedge \phi, \psi_{i+1}, \ldots, \psi_n)$ implies $\mathcal{M},g,w \mos{s} \psi$. Hence,  $\mathcal{M},g,v\mos{s'}   y \wedge \phi$ where $Den_g(z)=\{v \}$ iff exists $(v_1, \ldots,v_n) \in W_{s_1}\times \ldots\times W_{s_n}$ such that $R_{\sigma}v v_1 \ldots v_i \ldots v_n$ where $Den_g(z)=\{v \}$ and $\mathcal{M},g,v_e \mos{s'}  \psi_e$ for any $e \in [n], e\neq i$ and $  \mathcal{M}, g, v_i \mos{s_i} y \wedge \phi$. Hence,  $  \mathcal{M}, g, v_i \mos{s_i} y $ and  $  \mathcal{M}, g, v_i \mos{s_i}  \phi$, so $Den_g(y)=\{v_i\}$ and  $  \mathcal{M}, g, v_i \mos{s_i} \phi$. Then, if there exists $(v_1, \ldots,v_n) \in W_{s_1}\times \ldots\times W_{s_n}$ such that $R_{\sigma}v v_1 \ldots v_i \ldots v_n$ where $Den_g(z)=\{v \}$ and  $\mathcal{M},g,v_e \mos{s'}  \psi_e$ for any $e \in [n], e\neq i$ and $  \mathcal{M}, g, v_i \mos{s_i} \phi$, these imply $\mathcal{M},g,w \mos{s} \psi$. So, $\mathcal{M},g,v \mos{s'}  \sigma(\psi_1, \ldots, \psi_{i-1}, \phi, \psi_{i+1}, \ldots, \psi_n) $ where $ Den_g(z)=\{v \}$ implies $\mathcal{M},g,w \mos{s} \psi$. In conclusion, $\mathcal{M},g,w \mos{s'}  @_z^s\sigma(\psi_1, \ldots, \psi_{i-1}, \phi, \psi_{i+1}, \ldots, \psi_n) \to \psi$.

\begin{definition}[Named, pasted and $@$-witnessed sets]
Let $s\in S$ and $\Gamma_s$ be a set of formulas of sort $s$ from 
 ${\mathcal H}_{ \Sigma}(@_z,\forall)$. We say that 
 \vspace*{-0.2cm}
 \begin{itemize}
 \item $\Gamma_s$ is {\sf named} if one of its elements is a nominal,
 \item $\Gamma_s$ is {\sf pasted} if it is both {\sf 0-pasted} and {\sf 1-pasted}:
 \begin{itemize}
 \item[{\rm (-)}] $\Gamma_s$ is {\sf 0-pasted} if, for any $t\in S$,  $\sigma\in\Sigma_{s_1\cdots s_n,t}$, $z$ a state symbol of sort $t$, and $\phi$ a formula of sort $s_i$, whenever $@_z^s\phi\in \Gamma_s$ there exists a nominal $j\in {\rm NOM}_{s_i}$ such that $@_z^s\sigma(\ldots, \phi_{i-1},j \wedge \phi,\phi_{i+1},\ldots)\in \Gamma_s$. 
 \item[{\rm (-)}] $\Gamma_s$ is {\sf 1-pasted} if, for any $t\in S$,  $\sigma\in\Sigma_{s_1\cdots s_n,t}$, $z$ a state symbol of sort $t$, and $\phi$ a formula of sort $s_i$, whenever $@_z^s\sigma(\ldots, \phi_{i-1},\phi,\phi_{i+1},\ldots)\in \Gamma_s$ there exists a nominal $j\in {\rm NOM}_{s_i}$ such that $@_z^s\sigma(\ldots, \phi_{i-1},j \wedge \phi,\phi_{i+1},\ldots)\in \Gamma_s$. 
 \end{itemize}
\item $\Gamma_s$ is {\sf $@$-witnessed} if the following two conditions are satisfied:
\begin{itemize}
 \item[{\rm (-)}] for $s',t\in S$ , $x\in {\rm SVAR}_t$, $k\in {\rm NOM}_{s'}$ and any formula $\phi$ of sort $s'$,  whenever $@_k^s\exists x\, \phi\in \Gamma_s$ there exists $j\in {\rm NOM}_t$ such that $@_k^s\phi[j/x]\in\Gamma_s$,
 \item[{\rm (-)}] for any $t\in S$ and $x\in {\rm SVAR}_t$ there is $j_s\in {\rm NOM}_t$ such that $@_{j_x}^s x\in \Gamma_s$. 
 \end{itemize}
 \end{itemize}
 \end{definition}

 \begin{lemma}[Extended Lindenbaum Lemma]\label{lemma:lind}
Let $\Lambda$ be a set of  formulas in the language of ${\mathcal H}_{\Sigma}(@_z,\forall)$  and $s\in S$. Then any consistent set $\Gamma_s$ of formulas of sort $s$ from  ${\mathcal H}_{\Sigma}(@_z,\forall)+\Lambda$  can be extended to a named, pasted and $@$-witnessed maximal consistent set by adding countably many nominals to the language.   
\end{lemma}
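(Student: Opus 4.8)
The plan is to mirror the proof of the Extended Lindenbaum Lemma~\ref{extendLinden} for $\mathcal{H}_{\Sigma}(\forall)$, but to enlarge the enumeration so that at each stage we install witnesses for \emph{all three} closure conditions (named, pasted, $@$-witnessed) rather than only for $\exists$-witnessing. First I would fix a language $\mathcal{H}_{\Sigma}(@_z,\forall)^+$ obtained from $\mathcal{H}_{\Sigma}(@_z,\forall)+\Lambda$ by adjoining, for every sort, a countably infinite reserve of fresh nominals, and argue exactly as in Lemma~\ref{extendLinden} that $\Gamma_s$ stays consistent in the larger language: a derivation of $\bot_s$ using the new nominals becomes a derivation in the original language upon uniformly replacing each new nominal by a fresh state variable, since neither $\Lambda$ nor $\Gamma_s$ mentions the new nominals and no axiom distinguishes a nominal from a variable except through $(Name)$/$(Ref)$, which a fresh variable also validates.

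Next I would \emph{name} the set: choose a fresh nominal $k$ of sort $s$ and pass to $\Gamma_s\cup\{k\}$. This preserves consistency, for were $\Gamma_s\cup\{k\}$ inconsistent we would have $\Gamma_s\vds{s}\neg k$ with $k$ fresh; generalising on $k$ via Lemma~\ref{lem:gennom}, then using $(Q1)$, yields $\Gamma_s\vds{s}\neg\exists y\,y$, contradicting $(Name)$. I then enumerate all formulas $\phi_0,\phi_1,\dots$ of sort $s$ of $\mathcal{H}_{\Sigma}(@_z,\forall)^+$ and build an increasing chain $\Gamma_s^0\subseteq\Gamma_s^1\subseteq\cdots$ with $\Gamma_s^0=\Gamma_s\cup\{k\}$. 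At stage $n$, if $\Gamma_s^n\cup\{\phi_n\}$ is inconsistent set $\Gamma_s^{n+1}=\Gamma_s^n$; otherwise add $\phi_n$ and, according to its shape, adjoin the matching witness built from a so-far-unused reserve nominal: a pasting witness $@_z^s\sigma(\ldots,j\wedge\phi,\ldots)$ when $\phi_n$ is of the form handled by $(Paste0)$ or $(Paste1)$; an existential witness $@_k^s\phi[j/x]$ when $\phi_n=@_k^s\exists x\,\phi$; and, to secure the second $@$-witnessed clause, a witness $@_{j_x}^s x$ for each state variable $x$, scheduled along the same enumeration.

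The crux is that each witness installation preserves consistency, and here the paper's bespoke rules do the work. For the pasting witnesses this is precisely the role of $(Paste0)$ and $(Paste1)$: if $\Gamma_s^n\cup\{@_z^s\sigma(\ldots,j\wedge\phi,\ldots)\}$ were inconsistent, a finite conjunction $\chi$ of the other members gives $\vds{s}@_z^s\sigma(\ldots,j\wedge\phi,\ldots)\ri\neg\chi$, and since $j$ is fresh and distinct from $z$, $(Paste1)$ delivers $\vds{s}@_z^s\sigma(\ldots,\phi,\ldots)\ri\neg\chi$, contradicting the consistency of $\Gamma_s^n$ which already contains $@_z^s\sigma(\ldots,\phi,\ldots)$; the $(Paste0)$ case is identical. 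For the existential witness I would argue as in Lemma~\ref{extendLinden}: assuming inconsistency, pull the fresh $j$ out by Lemma~\ref{lem:gennom}, push the quantifier back in with $(Q1)$ and the Substitution Lemma~\ref{lem:subst}, and use $(Barcan@)$ together with $(SelfDual)$ to obtain $@_k^s\exists x\,\phi\ri\exists y\,@_k^s\phi[y/x]$, contradicting $@_k^s\exists x\,\phi\in\Gamma_s^n$. For the clause $@_{j_x}^s x$, freshness of $j_x$ combined with generalisation on nominals reduces the question to the provability of $\exists y\,@_y^s x$, which follows from the $(Ref)$ instance $@_x^s x$ by existential generalisation $(Q2)$.

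Finally, the union $\Gamma_s^+=\bigcup_n\Gamma_s^n$ is consistent (every $\Gamma_s^n$ is) and maximal (each $\phi_n$ is decided at stage $n$), contains $k$ so is named, and satisfies every pasting and $@$-witnessing clause because each triggering formula appears as some $\phi_n$ whose witness was installed at stage $n$. The genuinely delicate point is the existential-witness step, where one must verify that generalisation on the new nominal commutes correctly with $(Barcan@)$ under the many-sorted substitutability conventions of Lemma~\ref{lem:subst}; once that interaction is pinned down, the remainder is the routine Lindenbaum bookkeeping, uniform in the background theory $\Lambda$ since we only ever generalise on nominals absent from $\Lambda$.
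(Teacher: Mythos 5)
Your proposal is correct and follows essentially the same route as the paper's proof: adjoin a countable reserve of fresh nominals, name the set with a new nominal, then run a Lindenbaum enumeration that installs pasting and $@$-witnesses stage by stage, with $(Paste0)$/$(Paste1)$ securing consistency of the pasting additions and generalization on nominals (Lemma~\ref{lem:gennom}) plus $(Q1)$ securing the existential ones. The only divergences are bookkeeping: the paper adds the naming nominal and all witnesses $@_{j_x}^s x$ at stage zero (citing its companion paper for that set's consistency) and leaves the existential-witness consistency check implicit, whereas you schedule the variable witnesses along the enumeration and spell that check out via $(Barcan@)$ and $(SelfDual)$ --- both are sound variants of the same argument.
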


\begin{proof}
The proof generalizes to the $S$-sorted setting well-known proofs for the mono-sorted hybrid logic, see \cite[Lemma 7.25]{mod}, \cite[Lemma 3, Lemma 4]{pureax}, \cite[Lemma 3.9]{hyb}. 

For each sort $s\in S$, we add a set of new nominals and enumerate this set. Given a set of formulas $\Gamma_s$, define $\Gamma_s^k$ to be $\Gamma_s \cup \{ k_s\} \cup \{@_{j_x}^s x| \ x \in {\rm SVAR_s} \}$, where $k_s$ is the first new nominal of sort $s$ in our enumeration and $j_x$ are such that if $x$ and $y$ are different state variables of sort $s$ then also $j_x$ and $j_y$ are different nominals of same sort $s$. As showed in \cite{noi2}, $\Gamma_s^k$ is consistent.


Now we enumerate  on each sort $s \in S$  all the formulas of the new language obtained by adding the set of new nominals and define $\Gamma^0 := \Gamma_s^k$. Suppose we have defined $\Gamma^m$, where $m \geq 0$. Let $\phi_{m+1}$ be the $m+1-th$ formula of sort $s$ in the previous enumeration. We define $\Gamma^{m+1}$ as follows. If $\Gamma^{m}\cup \{\phi_{m+1}\}$ is inconsistent, then $\Gamma^{m+1} = \Gamma^{m}$. Otherwise:
\begin{itemize}
\item[(i)] $\Gamma^{m+1} = \Gamma^{m} \cup  \{\phi_{m+1}\} $, if $\phi_{m+1}$ is not of the form $@_z\sigma(\ldots, \varphi, \ldots)$, $@_x x$ or $@_j \exists x\varphi(x)$, where $j$ is any nominal of sort $s''$, $\varphi$ a formula of sort $s''$, $x \in {\rm SVAR_{s''}}$ and $z$ is a state symbol.
\item[(ii)] $\Gamma^{m+1} = \Gamma^{m} \cup  \{\phi_{m+1}\} \cup \{@_x  (k \wedge x ) \} $, if $\phi_{m+1}$ is of the form $@_x x $.
\item[(iii)] $\Gamma^{m+1} = \Gamma^{m} \cup  \{\phi_{m+1}\} \cup \{@_x \sigma(\ldots, k \wedge \phi, \ldots)  \} $, if $\phi_{m+1}$ is of the form $@_x \sigma(\ldots, \varphi, \ldots)$. 
\item[(iv)] $\Gamma^{m+1} = \Gamma^{m} \cup  \{\phi_{m+1}\} \cup \{ @_j \varphi[k/x]\}$, where $\phi_{m+1} $ is of the form $@_j \exists x\varphi(x)$. 
\end{itemize}
In clauses $(ii)$ and $(iii)$, $k$ is the first new nominal in the enumeration that does not occur in $\Gamma^i$ for all $i \leq m$, nor in $@_x \sigma(\ldots, \varphi, \ldots)$.

Let $\Gamma ^+= \bigcup_{n\geq 0} \Gamma^n$. Because $k \in \Gamma^0 \subseteq \Gamma^+$, this set in named, maximal, pasted and $@$-witnessed by construction. We will check if it is consistent for the expansion made in the second, third and fourth items.

Suppose $\Gamma^{m+1} = \Gamma^{m} \cup  \{\phi_{m+1}\} \cup \{@_x(k \wedge x) \} $ is an inconsistent set, where $\phi_{m+1}$ is $@_x x$. Then there is a conjunction of formulas $\chi \in \Gamma^m \cup \{\phi_{m+1}\} $ such that \mbox{$\vds{s} \chi \to \neg @_x(k \wedge x) $} and so \mbox{$\vds{s}@_x(k \wedge x)  \to \neg \chi$.} But $k$ is the first new nominal in the enumeration that does not occur neither in $\Gamma^m$, nor in $@_x x$ and by $Paste0$ rule we get $\vds{s} @_x x \to \neg \chi$. Then $ \vds{s} \chi \to \neg @_x x$, which contradicts the consistency of $\Gamma^m \cup  \{\phi_{m+1}\}$. 

Suppose $\Gamma^{m+1} = \Gamma^{m} \cup  \{\phi_{m+1}\} \cup \{@_x \sigma(\ldots, k \wedge \varphi, \ldots)  \} $ is an inconsistent set, where $\phi_{m+1}$ has the form $@_x \sigma(\ldots, \varphi, \ldots)$. Then there is a conjunction of formulas $\chi \in \Gamma^m \cup \{\phi_{m+1}\} $ such that $\vds{s} \chi \to \neg @_x \sigma(\ldots, k \wedge \varphi, \ldots)$ and so \mbox{$\vds{s} @_x \sigma(\ldots, k \wedge \varphi, \ldots) \to \neg \chi$.} But $k$ is the first new nominal in the enumeration that does not occur neither in $\Gamma^m$, nor in $@_x \sigma(\ldots, \varphi, \ldots)$, therefore, by $Paste1$ rule we get $\vds{s} @_x \sigma(\ldots, \varphi, \ldots) \to \neg \chi$. It follows that $\vds{s} \chi \to \neg @_x \sigma(\ldots, \varphi, \ldots)$, which contradicts the consistency of $\Gamma^m \cup  \{\phi_{m+1}\}$. \end{proof}


\begin{definition}[Named models and natural assignments]\label{def:canonic} For any $s \in S$, let $\Gamma_s$ be a named, pasted and witnessed maximal consistent set and for all state symbols $z$, let $\Delta_z =\{ \varphi \mid @_z^s \varphi \in \Gamma_s \}$. Define $W_s = \{ \Delta_x \mid z$ a state symbol of sort s$\}$. Then, we define $\mathcal{M}=(W, \{ R_{\sigma}\}_{\sigma \in \Sigma})$, the named model generated by the $S$-sorted set $\Gamma =\{ \Gamma_s\}_{s\in S}$, where $R_{\sigma}$ and $V$ are the restriction of the canonical relation and the canonical valuation. We define the natural assignment $g_s:{ \rm SVAR}_s \to W_s$ by $g_s(x) = \{ w \in W_s \mid x \in w\} $.
\end{definition}
\begin{lemma}[Existence Lemma]\label{lem:existwit2}
Let $\mathcal{M}=(W, \{ R_{\sigma}\}_{\sigma \in \Sigma})$ be a named model generated by a named and pasted $S$-sorted set $\Gamma$ and let $w$ be a witnessed maximal consistent set. If $\srb (\phi_1, \ldots, \phi_n) \in w$ then there exists witnessed maximal consistent sets $u_i$ such that ${R}_{\srb} wu_1\ldots u_n$ and $\phi_i \in u_i$ for any $i\in [n]$. 
\end{lemma}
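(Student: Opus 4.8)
The plan is to read the required successors directly off the nominals supplied by pastedness, exploiting that in a named model every world is named and hence, once $\Gamma$ is $@$-witnessed, automatically witnessed. Since $\Gamma$ is named, the world $w\in W_s$ contains a nominal $i_0$, and by Lemma~\ref{lem:prop} we have $w=\Delta_{i_0}$, so that $\psi\in w$ iff $@_{i_0}\psi\in\Gamma$. In particular the hypothesis $\srb(\phi_1,\ldots,\phi_n)\in w$ becomes $@_{i_0}\srb(\phi_1,\ldots,\phi_n)\in\Gamma$, which is the working formula.

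First I would generate the witnesses. Applying $1$-pastedness to $@_{i_0}\srb(\phi_1,\ldots,\phi_n)$ at coordinate $1$ gives a nominal $j_1\in{\rm NOM}_{s_1}$ with $@_{i_0}\srb(j_1\wedge\phi_1,\phi_2,\ldots,\phi_n)\in\Gamma$; applying it to this formula at coordinate $2$, and so on through coordinate $n$, yields nominals $j_1,\ldots,j_n$ with
\[
@_{i_0}\srb(j_1\wedge\phi_1,\ldots,j_n\wedge\phi_n)\in\Gamma .
\]
I then put $u_\ell:=\Delta_{j_\ell}$ for $\ell\in[n]$. By Lemma~\ref{lem:prop} each $u_\ell$ is a maximal consistent set containing the nominal $j_\ell$; being named, and since $\Gamma$ is $@$-witnessed, each $u_\ell$ is witnessed. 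This is precisely why, in contrast with Lemma~\ref{lem:existwit}, no auxiliary witnessing enumeration and no dummy state symbol are needed.

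Next I would verify $\phi_\ell\in u_\ell$. From $(Intro)$ one gets $\vdash (j_\ell\wedge\phi_\ell)\to @_{j_\ell}\phi_\ell$, and monotonicity of $\srb$ in its $\ell$-th argument (a consequence of $(K_\sigma)$ and $(Dual_\sigma)$) yields $\vdash\srb(\ldots,j_\ell\wedge\phi_\ell,\ldots)\to\srb(\ldots,@_{j_\ell}\phi_\ell,\ldots)$; composing with $(Back)$ gives $\vdash\srb(\ldots,j_\ell\wedge\phi_\ell,\ldots)\to @_{j_\ell}\phi_\ell$. Prefixing by $@_{i_0}$ through $(Gen@)$ and $(K@)$, and collapsing the iterated operator by $(Agree)$, I obtain $\vdash @_{i_0}\srb(\ldots,j_\ell\wedge\phi_\ell,\ldots)\to @_{j_\ell}\phi_\ell$. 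Since the antecedent lies in $\Gamma$, so does $@_{j_\ell}\phi_\ell$, i.e. $\phi_\ell\in\Delta_{j_\ell}=u_\ell$. Weakening the displayed membership also gives $@_{i_0}\srb(j_1,\ldots,j_n)\in\Gamma$, recording that the $j_\ell$ concretely name the successors.

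Finally I would check ${R}_{\srb}wu_1\ldots u_n$ against the dual characterization of the canonical relation from \cite[Lemma 2.18]{noi}, exactly as in Lemma~\ref{lem:existwit}: assuming $\spt(\psi_1,\ldots,\psi_n)\in w$, i.e. $@_{i_0}\spt(\psi_1,\ldots,\psi_n)\in\Gamma$, I must produce an $\ell$ with $\psi_\ell\in u_\ell$, i.e. $@_{j_\ell}\psi_\ell\in\Gamma$. Arguing by contradiction, suppose $@_{j_\ell}\neg\psi_\ell\in\Gamma$ for every $\ell$. The essential point is that satisfaction statements are rigid: from $(Back)$ and $(SelfDual)$ one derives the broadcast principle $\vdash @_{j_\ell}\neg\psi_\ell\to\spt(\ldots,@_{j_\ell}\neg\psi_\ell,\ldots)$, and the diamond--box interaction derivable from $(K_\sigma)$ then lets me strengthen coordinate $\ell$ of $@_{i_0}\srb(j_1,\ldots,j_n)$ by $\neg\psi_\ell$; performing this coordinate by coordinate and using $(Intro)$ to absorb $j_\ell\wedge @_{j_\ell}\neg\psi_\ell\to\neg\psi_\ell$ produces $@_{i_0}\srb(\neg\psi_1,\ldots,\neg\psi_n)\in\Gamma$, that is $@_{i_0}\neg\spt(\psi_1,\ldots,\psi_n)\in\Gamma$ by $(Dual_\sigma)$, contradicting the assumption. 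I expect this last step to be the main obstacle: propagating each rigid fact $@_{j_\ell}\neg\psi_\ell$ into the matching argument of the polyadic modality, and the coordinate-by-coordinate bookkeeping it requires for arbitrary arity, is where the interplay between the satisfaction operators and $\srb$ (through $(Back)$ and the globality of $@$) must be managed with care, whereas the rest of the argument is routine once the named witnesses are in hand.
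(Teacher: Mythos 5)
Your proposal is correct and follows the same overall strategy as the paper's proof: use namedness to write $w=\Delta_{i_0}$ for a nominal $i_0$ (via Lemma~\ref{lem:prop}), apply 1-pastedness coordinate by coordinate to obtain nominals $j_1,\ldots,j_n$ with $@_{i_0}\sigma(j_1\wedge\phi_1,\ldots,j_n\wedge\phi_n)\in\Gamma$, set $u_\ell=\Delta_{j_\ell}$, and establish $\phi_\ell\in u_\ell$ by exactly the paper's chain: $(Intro)$, modal monotonicity, $(Back)$, $(Agree)$. The one place where you genuinely diverge is the verification of $R_\sigma wu_1\ldots u_n$. The paper checks the ``diamond'' characterization of the canonical relation --- for arbitrary $\psi_\ell\in\Delta_{j_\ell}$, conclude $\sigma(\psi_1,\ldots,\psi_n)\in\Delta_{i_0}$ --- and closes the argument by invoking a ``Bridge axiom'', which, it should be noted, does not appear among the axioms of Figure~\ref{fig:unu} (it is a derivable hybrid principle, but the paper does not derive it). You instead check the dual characterization from \cite[Lemma 2.18]{noi} --- the same one the paper uses in Lemma~\ref{lem:existwit} --- and derive the required principle yourself: the rigidity of satisfaction statements, $@_{j}\chi\to\spt(\ldots,@_{j}\chi,\ldots)$, obtained from $(Back)$ and $(SelfDual)$, then absorbed coordinatewise through $(Intro)$ and $K$-reasoning. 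Your contradiction argument is in substance a derivation of Bridge, so the mathematical content coincides; your version has the merit of being self-contained relative to the listed axioms, at the cost of the polyadic bookkeeping you rightly flag as the delicate part. A small additional point in your favor: your clean accumulation of all nominals $j_1,\ldots,j_n$ before choosing the $u_\ell$ avoids the paper's interleaved presentation, which as written contains a circular-looking passage (``Let $\psi_1\in\Delta_{k_1}$ \ldots\ Hence, $\psi_1\in\Delta_{k_1}$'') whose intended content is exactly the membership argument you give.
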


\begin{proof}
Let $\srb (\phi_1, \ldots, \phi_n) \in w$, then $@^s_j\srb (\phi_1, \ldots, \phi_n) \in \Gamma_s$, but $\Gamma_s$ is pasted( then $1-pasted$), so there exists $k_1$ a nominal of sort $s_1$ such that $ @^s_j \srb (\phi_1 \wedge k_1, \ldots, \phi_n) \in \Gamma_s$, so $\srb (\phi_1 \wedge k_1, \ldots, \phi_n) \in \Delta_j=w$.
We want to prove that $\Delta_{k_1}, \ldots, \Delta_{k_n}$ are suitable choices for $u_1, \ldots, u_n$.

Let $\psi_1 \in \Delta_{k_1}$. Then $@_{k_1} \psi_1 \in \Gamma_s$ and by agreement property we get $@_{k_1} \psi_1 \in \Delta_{j}$. But $\vds{s} k_1 \wedge \psi_1 \to @_{k_1} \psi_1$ (instance of Introduction axiom), and by modal reasoning we get $\sigma(@_{k_1} \psi_1, \phi_2, \ldots, \phi_n) \in \Delta_j$. From Back axiom, $@_{k_1} \psi_1 \in \Delta_j$ and by using the agreement property, $@_{k_1} \psi_1 \in \Gamma_s$. Hence, $\psi_1 \in \Delta_{k_1}$. 

Now, $\srb(\psi_1, \phi_2, \ldots, \phi_n) \in \Delta_j$, then $@_j \sigma(\psi_1, \phi_2, \ldots, \phi_n) \in \Gamma_s $, but the set is pasted, then exists $k_2$ a nominal of sort $s_2$ such that $@_j\sigma(\psi_1, k_2 \wedge \phi_2, \phi_3, \ldots, \phi_n) \in \Gamma_s$. Then $\sigma(\psi_1, k_2 \wedge \phi_2, \phi_3, \ldots, \phi_n) \in \Delta_{j}$.

Let $\psi_2 \in \Delta_{k_2}$. Then $@_{k_2} \psi_2 \in \Gamma_s$ and by agreement property we get $@_{k_2} \psi_2 \in \Delta_{j}$. But $\vds{s} k_2 \wedge \psi_2 \to @_{k_2} \psi_2$ (instance of Introduction axiom), and by modal reasoning we get $\sigma(\psi_1, @_{k_2} \psi_2,\phi_3, \ldots, \phi_n) \in \Delta_j$. From Back axiom, $@_{k_2} \psi_2 \in \Delta_j$ and by using the agreement property, $@_{k_2} \psi_2 \in \Gamma_s$. Hence, $\psi_2 \in \Delta_{k_2}$. Therefore, by induction, we get that $\psi_i \in \Delta_{k_i}$ for any $i \in [n]$. Then $@_{k_i} \psi_i \in \Gamma_s$ if and only if, by agreement property, $@_{k_i} \psi_i \in \Delta_j$. But $ \sigma( k_1, \ldots,k_n) \in \Delta_j$ and by using the Bridge axiom, it follows that $\sigma( \psi_1, \ldots, \psi_n) \in \Delta_j$. We proved that for any $i \in [n]$, $\psi_i \in \Delta_{k_i}$ we have $\sigma( \psi_1, \ldots, \psi_n) \in \Delta_j$ and by Definition \ref{def:canonic}, it follows that $R_{\sigma}\Delta_j \Delta_{k_1} \ldots \Delta_{k_n}$.
\end{proof}

\begin{lemma}[Truth Lemma]\label{truthlemma2}
Let $\mathcal{M}$ be a model, $g$ an $\mathcal{M}$-assignment and $w$ a maximal consistent set. For any sort $s\in S$ and any formula $\phi$ of sort $s$, we have
$\phi \in w$ if and only if $\mathcal{M}, g, w \mos{s} \phi$.
\end{lemma}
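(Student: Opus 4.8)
The plan is to prove the Truth Lemma by structural induction on $\phi$, following the same skeleton as Lemma~\ref{truthlemma}, since the Boolean and binder cases are handled exactly as before (using the Agreement and Substitution Lemmas, together with the fact that $w$ is a named, pasted and $@$-witnessed maximal consistent set, so that the existential witnesses supplied for the $\exists x\,\psi$ case are now guaranteed by $@$-witnessedness rather than by a dummy state symbol). First I would dispose of the base cases: for $a\in{\rm PROP}_s\cup{\rm NOM}_s$, the equivalence $\phi\in w$ iff $w\in V_s(a)$ iff $\mathcal{M},g,w\mos{s}a$ is immediate from the definition of the canonical valuation in Definition~\ref{def:canonic}; for a state variable $x$ the natural assignment $g_s(x)=\{w\in W_s\mid x\in w\}$ gives $w\models x$ iff $x\in w$ directly.

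For the modal case $\phi=\srb(\phi_1,\ldots,\phi_n)$ I would proceed as in the previous Truth Lemma. The ``$\Rightarrow$'' direction invokes the Existence Lemma~\ref{lem:existwit2}: if $\srb(\phi_1,\ldots,\phi_n)\in w$ then there exist maximal consistent sets $u_1,\ldots,u_n$ with $R_\srb wu_1\ldots u_n$ and $\phi_i\in u_i$, so the induction hypothesis yields $\mathcal{M},g,u_i\mos{s_i}\phi_i$ and hence $\mathcal{M},g,w\mos{s}\phi$. The ``$\Leftarrow$'' direction unfolds the satisfaction clause for $\srb$ to obtain witnesses $u_i$ with $R_\srb wu_1\ldots u_n$ and $\mathcal{M},g,u_i\mos{s_i}\phi_i$; the induction hypothesis gives $\phi_i\in u_i$, and the definition of the canonical relation $R_\srb$ then forces $\srb(\phi_1,\ldots,\phi_n)\in w$. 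Note that in this named-model setting every point is already a named (hence witnessed) maximal consistent set, so the complication with $\star$ that appeared in Lemma~\ref{truthlemma} disappears entirely.

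The genuinely new case, and the one I expect to be the main obstacle, is the satisfaction operator $\phi=@_z^s\psi$, which did not occur in the first system. Here I would argue that $@_z^s\psi\in w$ iff $\psi\in\Delta_z$, where $\Delta_z=\{\varphi\mid @_z^s\varphi\in\Gamma_s\}$ is exactly the point of the model denoted by $z$; this is the content of Lemma~\ref{lem:prop}, which identifies $\Delta_z$ as a maximal consistent set containing $z$ and establishes the agreement property $@_z^s\phi\in\Delta_y$ iff $@_z^s\phi\in\Gamma_s$. The crucial semantic fact to pin down is that $Den_g(z)=\Delta_z$: for a nominal this follows because $V_s(z)=\{\Delta_z\}$ by the canonical valuation, and for a state variable it follows from the natural assignment together with the $@$-witnessedness clause guaranteeing a nominal $j_x$ with $@_{j_x}^s x\in\Gamma_s$. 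Granting $Den_g(z)=\Delta_z$, the satisfaction clause $\mathcal{M},g,w\mos{s}@_z^s\psi$ iff $\mathcal{M},g,Den_g(z)\mos{t}\psi$ reduces, via the induction hypothesis applied to $\psi$ at the point $\Delta_z$, to $\psi\in\Delta_z$, which by definition of $\Delta_z$ is $@_z^s\psi\in\Gamma_s$; the $(BroadcastS)$ rule and the $(Agree)$ axiom then transfer this between $w$ and $\Gamma_s$ to close the equivalence. The delicate point throughout is bookkeeping the sorts, since $@_z^s$ sends a formula of sort $t$ to one of sort $s$, so the induction hypothesis must be invoked at sort $t$ while the outer claim lives at sort $s$, and one must check that the denotation $\Delta_z$ indeed lies in $W_t$.
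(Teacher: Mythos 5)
Your proposal is correct and takes essentially the same route as the paper: structural induction with every case except $@_z$ handled exactly as in Lemma~\ref{truthlemma}, and the $@_z$ case reduced, via the identification $Den_g(z)=\Delta_z$ and the agreement properties of Lemma~\ref{lem:prop}, to the induction hypothesis applied at the point $\Delta_z$. The only (cosmetic) differences are that you transfer $\psi\in\Delta_z$ to $@_z^s\psi\in\Gamma_s$ directly from the definition of $\Delta_z$ where the paper invokes the $(Intro)$ axiom together with $z\in\Delta_z$, and that you make the sort bookkeeping explicit, which the paper leaves implicit.
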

\begin{proof}
We make the proof by structural induction on $\phi$. All the cases except the one for $@_z$ are similar with the ones of the $\mathcal{H}_{\Sigma}(\forall)$ system. Suppose $\mathcal{M}, g, w \mos{s} @^s _z \phi$ iff $\mathcal{M}, g, \Delta_z \mos{s} @^s _z \phi$ (by Lemma \ref{lem:prop}.(3)) iff $\phi \in \Delta_z$ (inductive hypothesis) iff $@^s_z \phi$ (by $Intro$ axiom together with $z \in \Delta_z$) iff $@^s_z \phi \in w$ (by Lemma \ref{lem:prop}.(2)).
\end{proof}

\begin{theorem}[Hybrid Completeness]
Every consistent set of formulas is satisfied.
\end{theorem}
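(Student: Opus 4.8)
The plan is to run the standard Henkin-style argument for hybrid completeness, assembling the lemmas established above. Let $\Gamma_s$ be a consistent set of formulas of sort $s$. First I would invoke the Extended Lindenbaum Lemma (Lemma~\ref{lemma:lind}) to extend $\Gamma_s$ to a named, pasted and $@$-witnessed maximal consistent set $\Gamma_s^+$, at the cost of adding countably many fresh nominals to the language. Since the formulas of $\Gamma_s$ do not mention these new nominals, a model for the enriched language can be read as a model for the original one without affecting satisfaction of the formulas of $\Gamma_s$; hence it suffices to satisfy $\Gamma_s^+$.

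Next I would build the named model $\mathcal{M} = (W, \{R_\sigma\}_{\sigma\in\Sigma}, V)$ generated by $\Gamma_s^+$, together with the natural assignment $g$, exactly as in Definition~\ref{def:canonic}: the worlds of sort $t$ are the sets $\Delta_z = \{\vp \mid @_z^t \vp \in \Gamma^+\}$ ranging over state symbols $z$ of sort $t$, and $g_t(x) = \{w \in W_t \mid x \in w\}$. Here Lemma~\ref{lem:prop} does the essential bookkeeping: part~(1) guarantees each $\Delta_z$ is itself a named maximal consistent set, parts~(2) and~(4) guarantee that the denotation of each nominal and each state variable is a genuine single world, so that $\mathcal{M}$ together with $g$ really is a standard model, and part~(3) tells us that $\Gamma_s^+$ itself equals $\Delta_z$ for some state symbol $z$ and hence occurs in $W_s$ as a world, which I name $w_0$.

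With the model in place, the core is the Truth Lemma (Lemma~\ref{truthlemma2}), whose modal case rests on the Existence Lemma (Lemma~\ref{lem:existwit2}) and whose $@$-case rests on Lemma~\ref{lem:prop}: for every sort $t$, every world $w \in W_t$ and every formula $\phi$ of sort $t$ we have $\phi \in w$ iff $\mathcal{M}, g, w \mos{t} \phi$. Applying this at $w_0 = \Gamma_s^+$ and using $\Gamma_s \subseteq \Gamma_s^+$, I conclude $\mathcal{M}, g, w_0 \mos{s} \phi$ for every $\phi \in \Gamma_s$, so $\Gamma_s$ is satisfied.

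The step I expect to be the main obstacle is verifying that the construction truly delivers a \emph{standard} model rather than merely a canonical one: one must check, via Lemma~\ref{lem:prop}, that every nominal is true at exactly one world (neither zero nor several) and that the natural assignment is single-valued, since only then are the semantic clauses for $@_z$ and for the binders legitimate. This is precisely the property that the named and $@$-witnessed conditions were designed to secure, and it is why, in contrast to the $\mathcal{H}_{\Sigma}(\forall)$ case, no dummy state symbol $\star$ needs to be glued on. The remaining work is routine, so the argument parallels the mono-sorted treatment of~\cite{hyb}.
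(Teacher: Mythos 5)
Your proposal is correct and follows precisely the route the paper intends: the paper states this theorem without spelling out a proof, leaving it as the standard assembly of the Extended Lindenbaum Lemma (Lemma~\ref{lemma:lind}), the named model and natural assignment of Definition~\ref{def:canonic}, the bookkeeping of Lemma~\ref{lem:prop}, and the Truth Lemma (Lemma~\ref{truthlemma2}) resting on the Existence Lemma (Lemma~\ref{lem:existwit2}) --- exactly the pieces you put together, including the observation that namedness and $@$-witnessing make the dummy state $\star$ of the $\mathcal{H}_{\Sigma}(\forall)$ construction unnecessary. The only (cosmetic) slip is notational: in the paper's convention the worlds of sort $t$ are the sets $\Delta_z=\{\vp \mid @_z^{s}\vp\in\Gamma_s^+\}$ with superscript $s$, the sort of $\Gamma_s^+$, since $@_z^{s}$ maps a formula of sort $t$ to one of sort $s$.
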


As in the mono-sorted case,  in ${\mathcal H}_{ \Sigma}(@_z,\forall)$ we  can define the  universal modality: $ A^s\varphi := \forall x @_x^s \varphi$, where $\varphi$ is a formula of sort $t$ and $x \in {\rm SVAR}_t$. Its dual is defined $E^s \varphi = \neg A^s \neg \varphi$.
 
 Note that, in our many-sorted setting, the universal modality has also the role of connecting the sorts (similarly to satisfaction operators).

 \begin{lemma} \label{lem:univ}
Let $\mathcal{M}=(W,\{R_{\sigma}\}_{\sigma \in \Sigma}, V)$ be an $S$-sorted model in ${\mathcal H}_{ \Sigma}(@_z,\forall)$ and $\varphi$ a formula of sort $t$. Then, for any $s \in S$, $ $ $\mathcal{M} \mos{s} A^s \varphi$ iff $\mathcal{M} \mos{t} \varphi$
  \end{lemma}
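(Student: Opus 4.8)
The plan is to prove the biconditional $\mathcal{M} \mos{s} A^s \varphi$ iff $\mathcal{M} \mos{t} \varphi$ by unfolding the definition of the universal modality $A^s\varphi := \forall x\, @_x^s \varphi$ (where $x \in {\rm SVAR}_t$) and appealing to the satisfaction clauses already established for $\forall$ and for $@_z$. First I would fix an arbitrary assignment $g$ and an arbitrary world $w \in W_s$ and compute, step by step, what $\mathcal{M}, g, w \mos{s} \forall x\, @_x^s \varphi$ means. By the satisfaction clause for $\forall$, this holds iff $\mathcal{M}, g', w \mos{s} @_x^s \varphi$ for every $x$-variant $g'$ of $g$. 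By the satisfaction clause for the satisfaction operator, each such conjunct holds iff $\mathcal{M}, g', Den_{g'}(x) \mos{t} \varphi$, and since $x$ is a state variable, $Den_{g'}(x) = g'_t(x)$.

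The heart of the argument is then to observe that as $g'$ ranges over all $x$-variants of $g$, the value $g'_t(x)$ ranges over \emph{all} of $W_t$, because assignments are total functions into the (nonempty) $S$-sorted carrier and $x$ can be reassigned freely. Hence $\mathcal{M}, g, w \mos{s} A^s\varphi$ holds iff $\mathcal{M}, g', v \mos{t} \varphi$ for every $v \in W_t$ (taking $g'$ the $x$-variant sending $x$ to $v$). Here I would invoke the Agreement Lemma (Lemma~\ref{lem:agree}): since the value of $x$ has been fixed to $v$ and $\varphi$ has sort $t$, the truth of $\varphi$ at $v$ under $g'$ does not depend on the particular $x$-variant chosen beyond its value at $x$, and in the case that $x$ does not occur freely in $\varphi$ it does not depend on $g'$ at all. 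This lets me conclude that the right-hand condition is precisely $\mathcal{M}, g, v \mos{t} \varphi$ for all $v \in W_t$, i.e. $\mathcal{M} \mos{t} \varphi$ in the sense of validity in the model (validity being independence of the starting world $w$ and of the assignment).

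To finish, I would note that the displayed chain of equivalences holds for the fixed but arbitrary $w$ and $g$, and that the left-hand side $\mathcal{M} \mos{s} A^s\varphi$ by definition quantifies over all worlds $w \in W_s$ and (in the notion of validity in a model) all assignments. The key point making both directions go through uniformly is that the inner condition ``$\mathcal{M}, g, v \mos{t} \varphi$ for all $v \in W_t$'' produced by unfolding $A^s$ is manifestly independent of both the starting sort-$s$ world $w$ and of $g$ restricted away from $x$; so quantifying over $w$ and $g$ on the left adds nothing, and what remains is exactly the statement of validity of $\varphi$ at sort $t$.

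I expect the main obstacle to be bookkeeping about assignments rather than any deep difficulty: one must be careful that $A^s$ connects sorts correctly, i.e. that reassigning the sort-$t$ variable $x$ genuinely sweeps out all of $W_t$ independently of the sort-$s$ evaluation point $w$, and that the Agreement Lemma is applied with the right set of free variables so that the final condition is clean. A secondary subtlety is matching the paper's precise convention for $\mathcal{M} \mos{t} \varphi$ (validity as ``for all worlds and all assignments'') so that the universal quantifier over $w$ and over $g$ on the left is correctly absorbed; this is where I would be most careful to state explicitly that the unfolded condition no longer mentions $w$ and depends on $g$ only through irrelevant variables.
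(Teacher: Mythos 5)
Your proposal follows essentially the same route as the paper's proof: unfold $A^s\varphi = \forall x\, @_x^s\varphi$ through the satisfaction clauses for $\forall$ and $@_x$, observe that the values $g'_t(x)$ of the $x$-variants $g'$ of $g$ sweep out all of $W_t$, and then absorb the now-vacuous quantification over $w$ and $g$ into the notion of validity at sort $t$. Your explicit appeal to the Agreement Lemma (together with the tacit freshness of $x$ in $\varphi$) makes precise a step that the paper's chain of equivalences passes over silently, so your version is, if anything, slightly more careful on the assignment bookkeeping.
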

  
  \begin{proof}
   $\mathcal{M} \mos{s} A^s \varphi$ iff for any $g$, any $w \in W_s$,  $\mathcal{M}, g, w \mos{s} A^s \varphi$
iff for any $g$, any $w \in W_s$,  $\mathcal{M}, g, w \mos{s} \forall x @^s_x \varphi$
iff for any $g$, any $w \in W_s$, any $g'\stackrel{x}{\sim} g$, $\mathcal{M}, g', w \mos{s}  @^s_x \varphi$
iff for any $g$, any $w \in W_s$, any $g'\stackrel{x}{\sim} g$, $\mathcal{M}, g', v \mos{t}  \varphi$ where $g'_t(x)=v$
iff for any $g$, any $w \in W_s$, any $g'\stackrel{x}{\sim} g$, $\mathcal{M}, g', v \mos{t}  \varphi$ for any $v \in W_t$
iff $\mathcal{M} \mos{t} \varphi$.
  \end{proof}

Let $\Gamma=\{\Gamma_s\}_{s\in S}$ an $S$-sorted set of formulas. Then $\mathcal{M} \mos \Gamma$ if and only if $\mathcal{M} \mos{s} \Gamma_s$, for any $s \in S$. We define $\Gamma^A_s = \Gamma_s \cup \{A^s \psi \mid \psi \in \Gamma_t$ for some $t \neq s$\}

\begin{proposition}
Let $\mathcal{M}=(W,\{R_{\sigma}\}_{\sigma \in \Sigma}, V)$ be an $S$-sorted model in ${\mathcal H}_{ \Sigma}(@_z,\forall)$ and $\Gamma=\{\Gamma_s\}_{s\in S}$ an $S$-sorted set of formulas. Let $s \in S$, then $\mathcal{M} \mos{} \Gamma$ if and only if $\mathcal{M} \mos{s}\Gamma^A_s$.
\end{proposition}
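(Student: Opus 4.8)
The plan is to treat this proposition as a direct consequence of Lemma~\ref{lem:univ}, which already does all the semantic work of relating $A^s\psi$ evaluated at sort $s$ to $\psi$ evaluated on its own sort $t$. Since $\mathcal{M}\mos{}\Gamma$ unfolds to ``$\mathcal{M}\mos{t}\Gamma_t$ for every $t\in S$'' and $\Gamma^A_s$ is, by definition, $\Gamma_s$ together with the universal-modality closures $A^s\psi$ of the formulas living on the other sorts, I would prove the two implications by running through $\Gamma^A_s$ formula by formula and invoking Lemma~\ref{lem:univ} once in each direction (which is unproblematic, since that lemma is stated as an equivalence). Throughout I read ``$\mathcal{M}\mos{s}$ of a set'' as validity in $\mathcal{M}$ of each of its members.

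For the forward implication, assume $\mathcal{M}\mos{}\Gamma$. The inclusion $\Gamma_s\subseteq\Gamma^A_s$ is handled first: the $s$-component of the hypothesis gives $\mathcal{M}\mos{s}\Gamma_s$ outright. It then remains to validate the added formulas. I fix an arbitrary $A^s\psi\in\Gamma^A_s$ with $\psi\in\Gamma_t$ for some sort $t\neq s$. From $\mathcal{M}\mos{t}\Gamma_t$ I obtain $\mathcal{M}\mos{t}\psi$, and Lemma~\ref{lem:univ}, read left-to-right, yields $\mathcal{M}\mos{s}A^s\psi$. As this covers every element of $\Gamma^A_s$, I conclude $\mathcal{M}\mos{s}\Gamma^A_s$.

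For the converse, assume $\mathcal{M}\mos{s}\Gamma^A_s$. I would establish $\mathcal{M}\mos{t}\Gamma_t$ for each sort $t$ separately. When $t=s$ this is immediate, again from $\Gamma_s\subseteq\Gamma^A_s$. When $t\neq s$, I pick an arbitrary $\psi\in\Gamma_t$; by the construction of $\Gamma^A_s$ we have $A^s\psi\in\Gamma^A_s$, so $\mathcal{M}\mos{s}A^s\psi$, and Lemma~\ref{lem:univ}, now read right-to-left, delivers $\mathcal{M}\mos{t}\psi$. Since $\psi$ was arbitrary in $\Gamma_t$, this gives $\mathcal{M}\mos{t}\Gamma_t$, and ranging over all $t\in S$ yields $\mathcal{M}\mos{}\Gamma$.

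The argument has no genuinely hard step: all the modal content is already isolated in Lemma~\ref{lem:univ}, and what is left is bookkeeping. The only points that need a little care are definitional, namely interpreting validity of a set as validity of each member and reading the clause ``$\psi\in\Gamma_t$ for some $t\neq s$'' in the definition of $\Gamma^A_s$ so that both directions quantify over exactly the same added formulas. I would therefore make explicit in the write-up that Lemma~\ref{lem:univ} is used as a biconditional, so that the single lemma serves both implications without further effort.
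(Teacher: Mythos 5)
Your proof is correct and follows essentially the same route as the paper's: both directions are obtained by splitting $\Gamma^A_s$ into $\Gamma_s$ and the added formulas $A^s\psi$, and applying Lemma~\ref{lem:univ} as a biconditional to move between $\mathcal{M}\mos{s}A^s\psi$ and $\mathcal{M}\mos{t}\psi$. There is no substantive difference from the paper's argument.
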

\begin{proof}
Suppose $\mathcal{M} \mos{} \Gamma$ if and only if $\mathcal{M} \mos{s} \Gamma_s$ for any $s \in S$. Then $\mathcal{M} \mos{s} \varphi$ for any $\varphi \in \Gamma_s$ for any $s \in S$. Let $s,t \in S$, so for any $\psi \in \Gamma_t$, $\mathcal{M} \mos{t} \psi$ and by Lemma \ref{lem:univ}, we get $\mathcal{M} \mos{s} A^s\psi$, for any $\psi \in \Gamma_t$. It follows that, for any $ \varphi \in \Gamma^A_s$, we have $\mathcal{M} \mos{s} \varphi$ if and only if $\mathcal{M} \mos{s} \Gamma^A_s$. For the right-to-left direction, let $s \in S$ and $\mathcal{M} \mos{s} \Gamma^A_s$. Then, for any $ \varphi \in \Gamma^A_s$, we have $\mathcal{M} \mos{s} \varphi$. If $ \varphi \in \Gamma_s$, then $\mathcal{M} \mos{s} \Gamma_s$. If  $\varphi \in \Gamma^A_s \backslash \Gamma_s$, then $\varphi$ is $A^s \psi$, where $\psi\in \Gamma_t$ for some $t\neq s$ in $S$.  Since $\mathcal{M} \mos{s} A^s \psi$, by  Lemma \ref{lem:univ},  $\mathcal{M} \mos{t} \psi $. Hence  $\mathcal{M} \mos{t} \psi $ for any $\psi \in \Gamma_t$ and  any $t \neq s$ in $S$. It follows that  $\mathcal{M} \mos{t} \Gamma_t$ for any $t\neq s$, so   $\mathcal{M} \mos{} \Gamma$.
\end{proof}

\section{The connection between Matching Logic and Hybrid Modal Logic}\label{sec3}

In this section we analyze the connection between 
Matching logic (\ml) and  the Many-sorted hybrid modal logic (\pl). We denote by \ml a Matching logic system (with and without definedness) and by \pl the corresponding Many-sorted hybrid modal logic system, as follows:
for \ml without definedness, the corresponding system is $\mathcal{H}_{\Sigma}(\forall)$, while
for \ml with definedness, the corresponding system is
$\mathcal{H}_{\Sigma}(@_z, \forall)$.

Recall that a matching logic signature or simply a signature
${\Sigma^{\ml}}= (S,{\rm VAR},\Sigma)$ is a triple with a nonempty set $S$ of sorts, an
$S$-indexed set ${\rm VAR} = \{{\rm VAR}_s\}_{s\in S}$ of countably infinitely many
sorted variables denoted $x:s$; $y:s$, etc., and an $(S^* \times S)$-indexed
countable set $\Sigma = \{\Sigma_{s_1 \ldots s_n,s}\}_{s_1 \ldots s_n,s \in S}$ of many-sorted symbols.

A \textit{matching logic} $ \Sigma^{\ml}$\textit{-model} $M =(\{M_s\}_{s\in S},\ \{\sigma_M\}_{\sigma\in \Sigma})$ consists of a non-empty carrier set $M_s$ for each sort $s\in S$ and a function $\sigma_M : M_{s_1}\times \ldots \times M_{s_n} \rightarrow \mathcal{P}(M_s)$ for each symbol $\sigma \in \Sigma_{s_1 \ldots s_n,s}$ called \textit{the interpretation} of $\sigma$ in $M$. 
 
The pointwise extension, $\sigma_M:\mathcal{P}(M_{s_1})\times \ldots \times \mathcal{P}(M_{s_n}) \rightarrow \mathcal{P}(M_s)$ is defined as: \\
$ {\sigma}_M(A_1, \ldots, A_n) = \bigcup \lbrace   \sigma_m(a_1, \ldots, a_n) | a_i \in A_i \ for \  all \ i \in [n]\rbrace, \mbox{ where } A_i \subseteq M_i  \mbox{ for all }  i \in [n].$
 
Let ${\Sigma^{\ml}}= (S,{\rm VAR},\Sigma)$  and let M be a ${\Sigma^{\ml}}$-model. Given a map $\rho : {\rm VAR} \rightarrow M$, called an $M$\textit{-valuation}, let its extension $\overline{\rho} : PATTERN^{\ml} \rightarrow \mathcal{P}(M)$ be inductively defined as fallows:
\begin{itemize}
\item $\overline{\rho}(x)= \{\rho(x)\}$, for all $x\in {\rm VAR}_s$

\item $\overline{\rho}(\neg \varphi) = M_s \backslash \overline{\rho}(\varphi)$, for all $\varphi \in PATTERN_s$
\item $\overline{\rho}(\varphi_1 \vee \varphi_2) = \overline{\rho}(\varphi_1)\cup \overline{\rho}(\varphi_2)$, for all $\varphi_1, \varphi_2$ patterns of the same sort
\item $\overline{\rho}(\sigma(\varphi_1, \ldots ,\varphi_n)) = \sigma_M ( \overline{\rho}(\varphi_1), \cdots, \overline{\rho}(\varphi_n)),$ for all $\sigma \in \Sigma_{s_1 \ldots s_n,s}$ and appropriate $\varphi_1, \ldots, \varphi_n$

\item $\overline{\rho}(\exists x. \varphi)=\bigcup_{a\in M_{s'}}  \overline{\rho[a/x]}(\varphi) $, for all $x\in {\rm VAR}_s$


\end{itemize}
where ``$\backslash$'' is set difference and $\rho[a/x]$ denotes de $M$-valuation $\rho'$ with $\rho'(x)=a$ and $\rho'(y)=\rho(y)$ for all $y \neq x$.

In Matching logic $M$ \textit{satisfies} $\varphi_s$, written $M \models \varphi_s$, iff $\overline{\rho}(\varphi_s)=M_s$ for all $\rho: {\rm VAR} \rightarrow M$.

For any sorts (not necessarily distinct) $s_1, s_2 \in S$, we consider the unary symbol $\ceil*{ _{-} }^{s_2}_{s_1} \in \Sigma_{s_1, s_2}  $, called the definedness symbol, and the pattern/symbol $\ceil*{ x:s_1 }^{s_2}_{s_1} \in \Sigma_{s_1, s_2} $, called $(Definedness)$.
For all $\rho$, we have $\overline{\rho}(\ceil{\varphi}^{s_2}_{s_1}) = M_{s_2}$ if $\overline{\rho}(\varphi)\neq\emptyset$, and $\overline{\rho}(\ceil{\varphi}^{s_2}_{s_1}) = \emptyset$, otherwise. \textit{Totality}, $\floor{_{-}}^{s_2}_{s_1}$, is defined as a derived construct dual to definedness: $\floor{\varphi}^{s_2}_{s_1} = \neg\floor{\neg \varphi}^{s_2}_{s_1} $.
The following remark clarifies the relation between definedness from Matching Logic and satisfaction operator from our logic.

\begin{remark}
    In a Matching Logic system with a definedness pattern, we can define $@_x^s\phi=\ceil{x\wedge\phi}_{s_x}^s$, while in ${\mathcal H}_{\Sigma}(@_z,\forall)$ we can define the Matching Logic definedness operator as $\ceil{\phi}_{s_\phi}^s = \exists x @_x^s \phi$. Note that the definedness operator is thus the dual of the universal modality $A$ recalled in the previous section.
\end{remark}

Note that any formula of \ml is a formula of \pl, but the converse does not hold, since a \pl formula might contain nominals or propositional variables.  Let  $Form^0={Form^0_s}_{s\in S}$ be the set of formulas in \pl that does not contain  nominals and propositional variables, i.e. the only variables in these formulas are  state variables. The following remark characterizes  the models of a formula from $Form^0$. 

\begin{remark}
Let $\mathcal{F}=(W, \{R_{\sigma}\}_{\sigma \in \Sigma})$ be an $S$-sorted frame in (hybrid) modal logic and $g: {\rm SVAR}_s \to W$ an assignment function. For any $V_1 \neq V_2$ evaluation functions and any models based on the frame $\mathcal{F}$, $\mathcal{M}_1=(\mathcal{F}, V_1)$ and $\mathcal{M}_2=(\mathcal{F}, V_2)$, we have $\mathcal{M}_1, g, w \mos{s} \varphi$ if and only if  $\mathcal{M}_2, g, w \mos{s} \varphi$ for any $\varphi \in Form^0_s$. In other word, because the evaluation function is defined to evaluate nominals and propositional variables, the satisfiability of formulas which contain only state variables will not be changed in models with the same frame and assignment function, but different evaluation functions.

\end{remark}

For any $s \in S$, $\varphi \in {Form^0_s}$ we define $\f,g,w \mos{s} \varphi$ if and only if $\mathcal{M}, g, w \mos{s} \varphi$ for any $\mathcal{M}$ model based on the frame $\f$ if and only if $\mathcal{M}, g, w \mos{s} \varphi$ for some $\mathcal{M}$ model based on the frame $\f$. Therefore, we use the following notation: $(\f,g) \mos{s} \varphi$ if and only if  $\f,g,w \mos{s} \varphi$  for any $w$ of sort $s$ in any model based on the frame $\f$.

The following definition gives the correspondence between the models of \ml and those of \pl, both logics having the same many-sorted signature.  
%
%
%
%
%
%
%

%
\begin{definition}\label{toml} Let $(S,\Sigma)$ be a many-sorted signature. \\
(1) Let $M$ be a model of \ml and $\rho$ an $M$-valuation. We define the frame $\f _M$ in $\pl$ such that $W_s = M_s$ for any $s \in S$ and $R_{\sigma}w w_1 \ldots w_n$ if and only if $w \in \sigma_M(w_1, \ldots, w_n)$. Moreover, let $ g_s(x)=\rho(x)$ for any $s \in S$ and $x \in {\rm SVAR}_s$. Hence, to any model and valuation $(M, \rho)$ of \ml we associate  a model $(\f, \rho)$ of \pl.

(2) Let  $\ (\f, g)$ be a model of \pl with $\f =(W, \{ R_{\sigma}\}_{\sigma \in \Sigma})$. We define a model in \ml as follows: let $M_s=W_s$ for any $s \in S$, $w \in \sigma_M(w_1, \ldots, w_n)$ if and only if $R_{\sigma}w w_1 \ldots w_n$ and $\rho(x)=g_s(x)$ for any $s \in S$ and $x \in {\rm SVAR}_s$. Hence, to any model $\ (\f, g)\ $ of \pl we can associate a model $(M_{\f}, g)$ of \ml.
\end{definition}
%

In the sequel, we need to speak about satisfiability in \ml  
and satisfiability in \pl. Therefore, to distinguish these two notions, we use $\mos{s}_{\ml}$ when refer to satisfiability in \ml and we use $\mos{s}_{\pl}$ for \pl.

\begin{proposition}\label{prop:equiv}
Let $\varphi \in Form^0_s$. Then
\begin{enumerate}
\item[(1)] $(M, \rho)\mosm{s} \varphi$ if and only if $(\f_M, \rho) \mosp{s} \varphi$
\item[(2)] $(\f, g) \mosp{s} \varphi$ if and only if $(M_{\f}, g)\mosm{s} \varphi$ 
\end{enumerate}

\begin{proof}
We only prove the first item of the proposition by induction over $\varphi$, the other one is similar. 

$\bullet$ $(M, \rho)\mosm{s} x$, where $x \in {\rm SVAR}_s$ iff $\overline{\rho}(x)=M_s=\{w\}$ iff $W_s=M_s$ and $ \rho(x)=w$ iff\\ $\f_M,\rho,w \mosp{s} x$, for any $w \in W_s$ iff $(\f_M,\rho) \mosp{s} x$.

$\bullet$ $(M, \rho)\mosm{s} \neg \varphi$ iff $(M, \rho)\ \not\!\!\mosm{s} \varphi$ iff $(\f_M, \rho)\ \not\!\!\mosp{s} \varphi$ (induction hypothesis) iff \\ $(\f_M, \rho) \mosp{s} \neg \varphi$.

$\bullet$ $(M, \rho)\mosm{s}  \varphi_1 \vee \varphi_2$ iff $(M, \rho)\mosm{s}  \varphi_1$ or $(M, \rho)\mosm{s}   \varphi_2$ iff $(\f_M, \rho) \mosp{s} \varphi_1$ or $(\f_M, \rho) \mosp{s} \varphi_2$ (induction hypothesis) iff $(\f_M, \rho) \mosp{s} \varphi_1 \vee \varphi_2$.

$\bullet$ $(M, \rho)\mosm{s} \sigma(\varphi_1, \ldots, \varphi_n)$ iff $\sigma_M(\overline{\rho}(\varphi_1), \ldots,\overline{\rho}( \varphi_n))=M_s$ iff $\bigcup \{\sigma_M(m_1, \ldots, m_n) \mid m_i \in \overline{\rho}(\varphi_i), $ for any $ i \in [n]  \}=M_s$ iff for any $m \in M_s$ exist $m_1 \in \overline{\rho}(\varphi_1),\ldots ,m_n \in \overline{\rho}(\varphi_n)$ such that $m = \sigma_M(m_1, \ldots,m_n)$ iff  for any $m \in M_s$ exist $m_1 \in M_{s_1},\ldots ,m_n \in M_{s_n}$ such that $R_{\sigma}m m_1 \ldots m_n$ and $\f_M, \rho, m_i  \mosp{s_i} \varphi_i$ for any $i \in [n]$ iff for any $m \in M_s$, $\f_M, \rho,m \mosp{s} \sigma(\varphi_1, \ldots, \varphi_n)$ iff $(\f_M, \rho) \mosp{s} \sigma(\varphi_1, \ldots, \varphi_n)$.

$\bullet$ $(M, \rho)\mosm{s} \forall x \varphi$ iff $\overline{\rho}(\forall x \varphi)=M_s$ iff $\bigcap_{a\in M_{s'}} \{ \overline{\rho'}(\varphi)\mid \mbox{ for all }\rho' \stackrel{x}{\sim} \rho \}=M_s$ iff for all $m \in M_s$ and for all $\rho' \stackrel{x}{\sim} \rho$, $m \in  \overline{\rho'}(\varphi) $ iff for all $\rho' \stackrel{x}{\sim} \rho$, $(M, \rho')\mosm{s} \varphi$ iff for all $\rho' \stackrel{x}{\sim} \rho$, $(\f_M, \rho') \mosp{s} \varphi$ (induction hypothesis) iff $(\f_M, \rho') \mosp{s} \forall x \varphi$.
\end{proof}

\end{proposition}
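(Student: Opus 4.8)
The plan is to prove both items by structural induction on $\varphi \in Form^0_s$, carrying out item~(1) in full since item~(2) follows by the same argument using Definition~\ref{toml}(2) in place of Definition~\ref{toml}(1). The cleanest route is to first establish a \emph{pointwise} correspondence and then read off the stated validity-style equivalence. Concretely, I would prove by induction that for every $m \in M_s = W_s$,
\[
m \in \overline{\rho}(\varphi) \quad\text{if and only if}\quad \f_M, \rho, m \mosp{s} \varphi .
\]
Once this is available, item~(1) is immediate: $(M,\rho)\mosm{s}\varphi$ means $\overline{\rho}(\varphi) = M_s$, i.e.\ every $m \in M_s$ lies in $\overline{\rho}(\varphi)$, which by the pointwise claim is exactly the assertion that $\f_M,\rho,m \mosp{s}\varphi$ for every $m$, that is, $(\f_M,\rho)\mosp{s}\varphi$.

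For the induction itself, the base case $\varphi = x$ with $x \in {\rm SVAR}_s$ compares $\overline{\rho}(x) = \{\rho(x)\}$ with the hybrid clause $\f_M,\rho,m \mosp{s} x$ iff $m = g_s(x) = \rho(x)$; both say $m = \rho(x)$. The disjunction and modal cases are routine: for $\sigma(\varphi_1,\ldots,\varphi_n)$ I would unfold the pointwise extension $\sigma_M(A_1,\ldots,A_n) = \bigcup\{\sigma_M(a_1,\ldots,a_n) \mid a_i \in A_i\}$, use the frame definition $R_\sigma m m_1\ldots m_n$ iff $m \in \sigma_M(m_1,\ldots,m_n)$, and apply the induction hypothesis to each argument, matching exactly the existential hybrid clause for $\sigma$. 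For $\forall x\,\varphi$ I would rewrite the Matching Logic meaning as the dual of the $\exists$-clause, $\overline{\rho}(\forall x\,\varphi) = \bigcap_{a \in M_{s'}} \overline{\rho[a/x]}(\varphi)$, and match the valuation variants $\rho[a/x]$ with the $x$-variants $g' \stackrel{x}{\sim} g$ of the hybrid semantics, which is legitimate because $g_s(x) = \rho(x)$ by construction; the induction hypothesis then closes the case.

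The step that requires the most care --- and the reason for phrasing the induction pointwise rather than directly on validity --- is negation. At the level of validity the equivalence ``$(M,\rho)\mosm{s}\neg\varphi$ iff not $(M,\rho)\mosm{s}\varphi$'' fails, because $\overline{\rho}(\varphi)$ is in general a proper subset of $M_s$ (already $\overline{\rho}(x)$ is a singleton), so $\overline{\rho}(\varphi)$ need not be one of $\emptyset$ or $M_s$. Pointwise, however, complementation is well behaved: $m \in \overline{\rho}(\neg\varphi) = M_s \setminus \overline{\rho}(\varphi)$ iff $m \notin \overline{\rho}(\varphi)$, which by the induction hypothesis is $\f_M,\rho,m \not\!\!\mosp{s}\varphi$, i.e.\ $\f_M,\rho,m \mosp{s}\neg\varphi$. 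This is the single genuine obstacle; every other clause is a direct translation between the Matching Logic extension $\overline{\rho}(\cdot)$ and the hybrid forcing relation, made possible precisely because $Form^0$ excludes nominals and propositional variables, so the two evaluations agree on the only relevant data, namely the interpretation of state variables and of the symbols in $\Sigma$.
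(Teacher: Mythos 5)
Your proof is correct, but it takes a genuinely different route from the paper's, and the difference matters. The paper runs the induction directly on the two \emph{global} relations, chaining equivalences such as $(M,\rho)\mosm{s}\neg\vp$ iff $(M,\rho)\mosmn{s}\vp$ (followed by the induction hypothesis), and likewise distributing global satisfaction over disjunction. Exactly as you point out, those intermediate steps are false in general: $\overline{\rho}(\neg\vp)=M_s$ says $\overline{\rho}(\vp)=\emptyset$, not $\overline{\rho}(\vp)\neq M_s$, and $\overline{\rho}(\vp_1)\cup\overline{\rho}(\vp_2)=M_s$ does not force either disjunct to evaluate to all of $M_s$ (take $\vp_1=x$ and $\vp_2=\neg x$ with $|M_s|\geq 2$). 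Moreover, in the $\sigma$ case the paper passes from $m_i\in\overline{\rho}(\vp_i)$ to $\f_M,\rho,m_i\mosp{s_i}\vp_i$, which is precisely the pointwise correspondence and not the global induction hypothesis, so the paper's induction only closes because it tacitly invokes the very lemma you make explicit. Your decomposition --- first prove by induction that $m\in\overline{\rho}(\vp)$ iff $\f_M,\rho,m\mosp{s}\vp$ for every $m\in M_s=W_s$, then read off item (1) of Proposition \ref{prop:equiv} because both sides are universal quantifications over $M_s$ of the two sides of that equivalence --- handles negation and disjunction soundly (complement and union are well behaved pointwise, not globally) and is the argument the paper's proof should have been; the cost is only one extra layer of statement. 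The endpoints of the paper's chains of ``iff''s are nevertheless correct, so the proposition itself stands, and your remark that item (2) follows by the same induction using part (2) of Definition \ref{toml} matches the paper's ``the other one is similar''.
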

\begin{theorem}\label{ml1}
For any formula $\varphi$ from \ml, we have $\vds{}_{\ml} \varphi$ iff $\vds{s}_{\pl} \varphi$.
\end{theorem}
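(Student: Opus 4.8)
The plan is to prove the theorem semantically, sandwiching provability in the two calculi between their respective soundness and completeness results and gluing the two halves together with the model correspondence of Definition~\ref{toml}. The first thing to observe is that every \ml-formula is built only from state variables, hence lies in $Form^0$; consequently Proposition~\ref{prop:equiv} applies to it verbatim, and the frame-level satisfaction $(\f,g)\mosp{s}\varphi$ is well defined (it does not depend on the valuation of nominals or propositional variables).

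First I would unfold the left-hand side. By soundness together with (weak) completeness of \ml with respect to its pointwise semantics, $\vds{}_{\ml}\varphi$ holds iff $\varphi$ is \ml-valid, i.e. $(M,\rho)\mosm{s}\varphi$ for every \ml-model $M$ and every $M$-valuation $\rho$. Symmetrically, for the right-hand side I would use the Hybrid Completeness theorem for \pl together with soundness: $\vds{s}_{\pl}\varphi$ holds iff $\{\neg\varphi\}$ is inconsistent, iff $\neg\varphi$ is unsatisfiable, iff $\varphi$ is \pl-valid, i.e. $(\f,g)\mosp{s}\varphi$ for every frame $\f$ and every assignment $g$. In each pairing this is the completeness result established earlier in the paper for the matching system ($\mathcal{H}_{\Sigma}(\forall)$ for \ml without definedness, $\mathcal{H}_{\Sigma}(@_z,\forall)$ for \ml with definedness).

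The bridge between the two validity statements is exactly Proposition~\ref{prop:equiv} read through Definition~\ref{toml}. For the forward direction, assuming $\varphi$ is \ml-valid, I take an arbitrary $(\f,g)$, pass to the associated \ml-model $(M_\f,g)$ of Definition~\ref{toml}(2), use \ml-validity to get $(M_\f,g)\mosm{s}\varphi$, and transport this back by Proposition~\ref{prop:equiv}(2) to $(\f,g)\mosp{s}\varphi$; hence $\varphi$ is \pl-valid. Conversely, assuming $\varphi$ is \pl-valid, I take an arbitrary $(M,\rho)$, pass to $(\f_M,\rho)$ of Definition~\ref{toml}(1), obtain $(\f_M,\rho)\mosp{s}\varphi$, and apply Proposition~\ref{prop:equiv}(1) to conclude $(M,\rho)\mosm{s}\varphi$. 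Since the two constructions of Definition~\ref{toml} are mutually inverse on objects, quantifying over all \ml-models coincides with quantifying over all \pl-frame-models, so the two validity notions are literally the same. Chaining the three equivalences yields $\vds{}_{\ml}\varphi$ iff $\vds{s}_{\pl}\varphi$.

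I expect the difficulty to be one of careful bookkeeping rather than of substance. The delicate point is to check that \ml's condition ``$\overline{\rho}(\varphi)=M_s$ for all $\rho$'' genuinely matches \pl's ``true at every world under every assignment'', which is precisely what the global reading $(\f,g)\mosp{s}\varphi$ encodes, and to keep the two quantifiers (over valuations/assignments and over worlds) aligned across the translation. A second point requiring care is that the argument leans on completeness of \ml itself, which is not reproved here; this step must therefore be invoked as a cited result, and one must verify that the cited completeness is for the same signature and the same with/without-definedness fragment as the \pl system being matched. Apart from these alignments the proof is a routine transport of validity across the two inverse model constructions.
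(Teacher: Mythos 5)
Your proposal is correct and follows essentially the same route as the paper's proof: reduce provability on both sides to validity via the respective completeness theorems (citing \ml's completeness, using the paper's Hybrid Completeness for \pl), observe that \ml-formulas lie in $Form^0$ so their \pl-satisfaction is frame-level (independent of the valuation of nominals and propositional variables), and transport validity in both directions through Definition~\ref{toml} and Proposition~\ref{prop:equiv}. Your write-up is in fact somewhat more explicit than the paper's about the quantifier alignment and the mutual inverseness of the two model constructions, but these are refinements of the same argument rather than a different one.
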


\begin{proof}
Let $\varphi$ be a formula of sort $s$ from Matching logic. Then $\varphi \in Form^0_s$. From the Completeness Theorem proved in \cite{rosulics}, $\vds{}_{{}_{\ml}} \varphi$ iff for any model $M$ from \ml, $(M, \rho) \mosm{s}  \varphi$. From the Completeness Theorem proved for \pl, we have $\vds{s}_{{}_{\pl}} \varphi$ iff for any model $\mathcal{M}$ and any $w \in W_s$,  $\mathcal{M}, g, w \mosp{s} \varphi$. Let $M$ be a model from \ml, such that $(M, \rho) \mosm{s}  \varphi$. But $\varphi \in Form^0_s$, so the satisfiability of the formula $\varphi$ is not affected by the evaluation function, but by frame, by assignment function and by world. Therefore, $\vds{s}_{{}_{\pl}} \varphi$ iff for any $w \in W_s$ and any $\f$, $\f, g,w \mosp{s} \varphi$ iff for any $w \in W_s$ and any $\f$, $(\f, g)\mosp{s} \varphi$. But case 1. from Proposition \ref{prop:equiv} tell us that to any model in \ml where  $(M, \rho) \mosm{s}  \varphi$ we can associate a model in \pl such that $(\f_M, \rho) \mosp{s} \varphi$. And case 2 of Proposition \ref{prop:equiv} tell us that for any model in \pl $(\f, \rho) \mosp{s} \varphi$ we have  $(M, \rho) \mosm{s}  \varphi$. Therefore, our proof is completed.
\end{proof}

So far we've remarked that formulas of \ml are particular formulas of \pl and we've analized the satisfaction of such formulas in both logics. In the most general case, a formula from \pl has nominals and propositional variables (that are interpreted as sets that are not necessarily singletons). In the sequel we show how we can represent  any \pl formula in \ml. Following the well-known  theorem of constants, our main steps are the following:
\begin{enumerate}
\item  we  represent the  propositional variables from \pl  as constant operations in \ml; 
\item  we represent the nominals from \pl as constant operations in \ml and, in order to interpret them as singletons, we ask them to satisfy the property of the functional patterns from \ml. 
\end{enumerate} 

We need to recall further definitions from \ml. For each pair of sorts $s_1$ (for the compared patterns) and $s_2$ (for the context in which the equality is used), equality is defined $_{-}$ $ {=^{s_2}_{s_1}}$ $_{-}$ as the following derived construct:
$ \varphi =^{s_2}_{s_1} \varphi'\ \  \equiv \ \  \floor{\varphi \leftrightarrow \varphi'}^{s_2}_{s_1} \ \  \mbox{, where } \varphi, \varphi'\in PATTERN_{s_1}$.

Let $(S, \Sigma)$ be a many sorted signature and assume that 
 $\pl_{(S,\Sigma)}$ \ is the system ${\mathcal H}_{\Sigma}(@_z,\forall)$ as before. We define $\Sigma_{PROP}=\{c_p\mid p\in PROP\}$ and $\Sigma_{NOM}=\{c_i\mid i\in NOM\}$. We set

 $\Sigma' =\Sigma\cup \Sigma_{PROP}\cup \Sigma_{NOM}$ and $\Gamma'=\{\exists x (x=c_i)|i\in NOM\}$. 
 
If $\varphi$ is a formula in $\pl_{(S,\Sigma)}$, let $\vp'$ be the formula obtained by replacing $p$ with $c_p$ for any $p\in PROP$ and $i$ with $c_i$ for any $i\in NOM$. Hence $\vp'$ is a formula in \ml over the signature $(S,\Sigma')$, which will be called $\ml_{(S,\Sigma')}$.

\begin{theorem}\label{ml2}
Let $(S,\Sigma)$ be a many-sorted signature and assume$\vp$ is a formula of sort $s$ in   $\pl_{(S,\Sigma)}$. If $\Sigma'$, $\Gamma'$ and $\vp'$ are defined as above then $\vds{s}_{\pl_{(S,\Sigma)}}\vp$ \ iff\  $\Gamma'\vdash_{\ml_{(S,\Sigma')}}\vp'$.

\end{theorem}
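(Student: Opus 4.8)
The plan is to argue semantically, exactly as in the proof of Theorem~\ref{ml1}: reduce both provability claims to validity by means of the completeness theorems, and then match up the two classes of models. On the \pl side, soundness together with the Hybrid Completeness theorem for ${\mathcal H}_{\Sigma}(@_z,\forall)$ gives that $\vds{s}_{\pl_{(S,\Sigma)}}\vp$ holds iff $\mathcal{M},g,w\mos{s}\vp$ for every model $\mathcal{M}$, every assignment $g$ and every world $w$ of sort $s$. On the \ml side, the completeness theorem of~\cite{rosulics}, in its form relative to a theory, gives that $\Gamma'\vdash_{\ml_{(S,\Sigma')}}\vp'$ holds iff $(M,\rho)\mosm{s}\vp'$ for every $\rho$ and every $\ml_{(S,\Sigma')}$-model $M$ satisfying all formulas of $\Gamma'$. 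It therefore suffices to show that the two families of models coincide under the translation and that the translation preserves satisfaction.

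First I would set up the model correspondence, extending Definition~\ref{toml} to the new constants. Given a \pl-model $\mathcal{M}=(W,(R_\sigma),V)$ with assignment $g$, I build the $\ml_{(S,\Sigma')}$-model $M$ with $M_s=W_s$, with $\sigma_M$ read off $R_\sigma$ as in Definition~\ref{toml}, and with the fresh nullary symbols interpreted by $(c_p)_M=V_s(p)$ and $(c_i)_M=V_s(i)$; set $\rho(x)=g_s(x)$. Conversely, any $\ml_{(S,\Sigma')}$-model yields a \pl-model by reading the relations off $\sigma_M$ and the propositional and nominal valuations off the constants. The key point is that in \ml a nullary symbol is interpreted as an arbitrary subset, so $(c_p)_M$ ranges over all subsets just as propositional valuations do, whereas nominals must be singletons. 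I would verify that $M$ satisfies $\Gamma'$ exactly when each $(c_i)_M$ is a singleton: unfolding $\exists x\,(x=c_i)=\exists x\,\floor{x\leftrightarrow c_i}$ shows $\overline{\rho}(\exists x\,(x=c_i))=M_s$ iff there is an $a$ with $(c_i)_M=\{a\}$. Hence the models of $\Gamma'$ are precisely the standard \pl-models in which nominals name single states, and the construction is a bijection between the two model classes.

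Next I would prove the Truth-Lemma-style statement that, under this correspondence, $\mathcal{M},g,w\mos{s}\vp$ iff $w\in\overline{\rho}(\vp')$, by structural induction on $\vp$. The Boolean, modal and quantifier cases are exactly those of Proposition~\ref{prop:equiv}, while the atoms $p$ and $i$ are now handled by $w\in V_s(p)=(c_p)_M=\overline{\rho}(c_p)$ and $w\in V_s(i)=(c_i)_M=\overline{\rho}(c_i)$. The genuinely new case is $@_z$, which does not appear in Proposition~\ref{prop:equiv}: here $\vp'$ encodes $@_z^s\psi$ by the definedness pattern $\ceil{z'\wedge\psi'}$ recalled in the Remark, where $z'$ is $x$ if $z$ is a state variable and $c_i$ if $z$ is the nominal $i$. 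Since $\overline{\rho}(z')=\{Den_g(z)\}$ is a singleton---and for a nominal this is exactly where the hypothesis $M$ satisfies $\Gamma'$ is used---we get $\overline{\rho}(\ceil{z'\wedge\psi'})=M_s$ iff $Den_g(z)\in\overline{\rho}(\psi')$, which by the induction hypothesis is iff $\mathcal{M},g,Den_g(z)\mos{t}\psi$, i.e.\ iff $\mathcal{M},g,w\mos{s}@_z^s\psi$ (the last being independent of $w$, matching the whole-carrier-or-empty behaviour of $\ceil{\cdot}$).

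Finally I would combine the pieces. Quantifying the Truth Lemma over all $w$ turns $\mathcal{M},g,w\mos{s}\vp$ for all $g,w$ into $\overline{\rho}(\vp')=M_s$ for all $\rho$, that is $(M,\rho)\mosm{s}\vp'$ for all $\rho$; and since the construction is a bijection onto the models of $\Gamma'$, ranging over all \pl-models is the same as ranging over all $\ml_{(S,\Sigma')}$-models satisfying $\Gamma'$. Chaining the two completeness equivalences through this correspondence yields $\vds{s}_{\pl_{(S,\Sigma)}}\vp$ iff $\Gamma'\vdash_{\ml_{(S,\Sigma')}}\vp'$. I expect the main obstacle to be the $@_z$ case together with the singleton correspondence: one must check carefully that $\Gamma'$ is exactly strong enough to force the nominal-constants to be singletons (no more, no less), so that both the atomic nominal clause and the $@$-clause go through, and that the \ml completeness theorem is invoked in its theory-relative form, since the set $\Gamma'$ of hypotheses may be infinite.
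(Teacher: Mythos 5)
Your proposal is correct and follows essentially the same route as the paper's proof: extend the model correspondence of Definition~\ref{toml} by interpreting the new constants as $V(p)$ and $V(i)$, observe that $\Gamma'$ forces the nominal constants to denote singletons (the paper cites \cite[Proposition 5.18]{rosu} for this, where you unfold the definedness pattern directly), establish the satisfaction equivalence between $\vp$ and $\vp'$ under this correspondence, and conclude by the completeness theorems on both sides. The only difference is one of detail, not of method: you spell out the pointwise truth-lemma induction (including the $@_z$ case via $\ceil{z'\wedge\psi'}$ and the theory-relative form of \ml completeness), which the paper compresses into ``one can easily prove.''
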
 
\begin{proof} Let ${\mathcal M}=(W,\{R_\sigma\}_{\sigma\in \Sigma}, V)$ be a model for $\vp$ in $\pl_{(S,\Sigma)}$. We define 
$M'=(W,\{\sigma_{M'}\}_{\sigma\in\Sigma'})$ such that $\sigma_{M'}$ is defined as in Definition \ref{toml} for $\sigma\in \Sigma$,
 ${c_p}_{M'} = V(p)$ for any $p\in SVAR$ and ${c_i}_{M'}=V(i)$ for any $i\in NOM$. Note that $V(i)$ is a singleton set, so $M'\models_{\ml_{(S,\Sigma')}}\Gamma'$. For any  $g:SVAR\to W$ one can easily prove that 
 ${\mathcal M}, g, w\mos{s}_{\pl_{(S,\Sigma)}} \vp$ for any $w\in W_s$ if and only if $(M,g)\models_{\ml_{(S,\Sigma')}} \vp'$. Conversly, if $(M',\rho)$ is a model for $\vp'$ in 
 $\ml_{(S,\Sigma')}$ such that $M' \models_{\ml_{(S,\Sigma')}} \Gamma'$ then $\overline{\rho}(c_i)$ is a singleton set by \cite[Proposition 5.18]{rosu}, so we can safely define $V(i)=\overline{\rho}(c_i)$ for any $i\in NOM$ and $V(p)=\overline{\rho}(c_p)$ for any $p\in PROP$. If ${\mathcal M}=(M', \{R_\sigma\}_{\sigma\in \Sigma}, V)$ where $R_\sigma$ is defined as in Definition \ref{toml}. One can easily see that 
 $M'\models_{\ml_{(S,\Sigma')}}\vp'$ if and only if ${\mathcal M}, \rho, w\mos{s}_{\pl_{(S,\Sigma)}}\vp$ for any $w\in M_s$. The intended syntactical connection follows using the completness theorems for \ml and \pl. 
\end{proof}

\section{Conclusions}

The results proved in Section \ref{sec3} allow the transfer of results between  many-sorted hybrid modal logic and Matching logic. Note that, both in this paper, as well as in \cite{rosulics,rosu}, there are two pairs of systems we can consider, the connection being stated by Theorem \ref{ml1} and Theorem \ref{ml2}: ${\mathcal H}_{\Sigma}(@_z,\forall)$  is related to Matching logic with Definedness \cite{rosu}, while 
${\mathcal H}_{\Sigma}(\forall)$ is related to Matching logic without Definedness \cite{rosulics}.   

While Matching logic is a young logic for program verification, the hybrid modal logic is quite established, with roots go back to the work of Prior in the 50's \cite{prior}. As we proved in this paper, they are strongly connected and the connection goes both ways.  At the same time, each system has its peculiarities, an important distinction being the local (in the modal case) versus global (in the case of Matching logic) approach to deduction.    Modal logic in general and hybrid logic in particular has a plethora of applications, both theoretical and practical. Matching logic supports the development of the ${\mathbb K}$ framework, leading not only to formal specification, but also to concrete implementations. We hope that the interaction between this two approaches  will be of further interest for both systems and we plan to further investigate it in the future.

%

\nocite{*}
\bibliographystyle{eptcs}

\begin{thebibliography}{8}

  \bibitem{hand}
Areces, C., ten Cate, B.: \emph{Hybrid Logics}. In: Handbook of Modal Logic, P. Blackburn et al. (Editors) 3, pp. 822--868 (2007), \doi{10.1016/s1570-2464(07)80017-6}.


 \bibitem{pureax}
Blackburn, P., ten Cate, B.: \emph{Pure Extensions, Proof Rules, and Hybrid Axiomatics}. Studia Logica 84(2), pp. 277--322 (2006), \doi{10.1007/s11225-006-9009-6}.

\bibitem{hyb}
Blackburn, P., Tzakova, M.: \emph{Hybrid Completeness}. Logic Journal of the IGPL 4, pp. 625--650 (1998), \doi{10.1093/jigpal/6.4.625}.

\bibitem{hybtemp}
Blackburn, P., Tzakova, M.: \emph{Hybrid languages and temporal logic}. Logic Journal of the IGPL 7, pp. 27--54 (1999), \doi{10.1093/jigpal/7.1.27}.

\bibitem{mod}
Blackburn, P., Venema, Y., de Rijke, M.: Modal Logic. Cambridge University Press (2002), \doi{10.1017/CBO9781107050884}.
  
  
  
\bibitem{rosulics}  
Chen, X., Ro\c su, G.: \emph{Matching mu-Logic}.  LICS'19. To appear. Technical report: http://hdl.handle.net/2142/102281 (2019), \doi{10.1109/LICS.2019.8785675}.

 
\bibitem{goranko}
Gargov, G., Goranko, V.: \emph{Modal logic with names}. Journal of Philosophical Logic 22, pp. 607--636 (1993), \doi{10.1007/BF01054038}.

\bibitem{goranko2}  
Goranko, V., Vakarelov, D.: \emph{Sahlqvist Formulas in Hybrid Polyadic Modal Logics}. Journal of Logic and Computation 11 (2001), \doi{10.1093/logcom/11.5.737}.
%



  




\bibitem{noi}
Leu\c stean, I., Moang\u a, N., \c Serb\u anu\c t\u a, T. F.: \textit{A many-sorted polyadic modal logic}. arXiv:1803.09709, submitted (2018).

\bibitem{noi2}
Leu\c stean, I., Moang\u a, N., \c Serb\u anu\c t\u a, T. F.:
\textit{Operational semantics and program verification using many-sorted hybrid modal logic}. arXiv:1905.05036 (2019)

\bibitem{prior}
P. Øhrstrøm and P. Hasle. A. N. Prior’s rediscovery of tense logic. Erkenntnis, 39, pp. 23–-50 (1993).



\bibitem{rosu}
Ro\c su, G.: \emph{Matching logic}. In:  Logical Methods in Computer Science 13(4), pp. 1--61 (2017), \doi{10.23638/LMCS-13(4:28)2017}.





  
 
\end{thebibliography}

\newpage

\end{document}